\newcommand{\be}{\begin{equation}}
\newcommand{\ee}{\end{equation}}
\newcommand{\bpi}{\mbox{\boldmath{ $\pi $}}}
\newcommand{\bII}{{\bf I }}
\newcommand{\bX}{{\bf X }}
\newcommand{\bI}{{\bf I }}
\newcommand{\bZ}{{\bf Z }}
\newcommand{\bz}{{\bf z }}
\newcommand{\bx}{{\mathbf{x}}}
\newcommand{\bM}{{\bf M}}
\newtheorem{Def}{Definition}
\newtheorem{Pro}{Proposition}
\newtheorem{Rem}{Remark}
\begin{document}

\title{Distributed Fusion with Multi-Bernoulli Filter based on Generalized Covariance Intersection}

\author{Bailu Wang, Wei Yi*, Reza Hoseinnezhad,
Suqi Li, Lingjiang Kong, Xiaobo Yang
\thanks{This work was supported by the National Natural Science Foundation of China under Grants 61301266, the Chinese Postdoctoral Science Foundation under Grant 2014M550465, and supported by the ARC Discovery Project grant DP130104404. \emph{(Corresponding author: Wei Yi.)}

B. Wang, W. Yi, S. Li, L. Kong and X. Yang are with the School of  Electronic Engineering, University of Electronic Science and Technology of China, Chengdu 611731, China (Email: w\_b\_l3020@163.com; kussoyi@gmail.com;  qi\_qi\_zhu1210@163.com; lingjiang.kong@gmail.com; yangxb\_uestc@hotmail.com).

R. Hoseinnezhad is with the School of Aerospace, Mechanical and Manufacturing Engineering, RMIT University, Victoria 3083, Australia (Email: reza.hoseinnezhad@rmit.edu.au). 
}
}
\vspace{-0.2in}
\maketitle
 \thispagestyle{empty}
\begin{abstract}
In this paper, we propose a distributed multi-object tracking algorithm through the use of multi-Bernoulli (MB) filter based on generalized Covariance Intersection (G-CI).
Our analyses show that the G-CI fusion with two MB posterior distributions does not admit an accurate closed-form expression.
To solve this problem, we firstly approximate the fused posterior as the unlabeled version of $\delta$-generalized labeled multi-Bernoulli ($\delta$-GLMB) distribution, referred to as generalized multi-Bernoulli (GMB) distribution.
Then, to allow the subsequent fusion with another multi-Bernoulli posterior distribution, e.g., fusion with a third sensor node in the sensor network, or fusion in the feedback working mode, we further approximate the fused GMB posterior distribution as an MB distribution which matches its first-order statistical moment.
The proposed  fusion algorithm is implemented using sequential Monte Carlo technique and its performance is highlighted by numerical results.
\end{abstract}


%
\IEEEpeerreviewmaketitle
\section{Introduction}
Distributed multi-sensor multi-object tracking (DMMT) methods generally benefit from lower communication cost and higher fault tolerance, compared with centralized multi-object tracking solutions. As such, they have increasingly attracted interest from tracking community. Devising  DMMT solutions becomes particularly challenging when the correlations between the estimates from different sensors are not known.  The optimal solution to this problem was developed in \cite{CY-Chong}, but the computational cost of calculating the common information can make the solution intractable in many real-world applications. An alternative is  the suboptimal fusion technique, namely, Generalized Covariance Intersection (G-CI)  or  exponential mixture densities (EMD) \cite{EMD-Julier} proposed by Mahler \cite{Mahler-1}. G-CI\footnote{G-CI has been applied in the literature with various names such as Chernoff fusion~\cite{chernoff-fusion} and  geometric mean density (GMD) \cite{GMD-fusion}.} is the generalization of Covariance Intersection \cite{Uhlmann} which only utilizes the mean and covariance and is limited to Gaussian posteriors. The highlight of G-CI is that it is capable to fuse  both Gaussian \cite{EMD-Julier} \cite{chernoff-fusion}  and non-Gaussian  multi-object distributions from different sensors with completely unknown correlations.

 Following the work of Mahler~\cite{Mahler-1}, Clark \textit{et~al.}~\cite{Clark} derived several tractable formulations of G-CI fusion for  special types of multi-object distributions  including Poisson, independent identically distributed (i.i.d.) clusters and Bernoulli distributions. Using these formulations, a sequential Monte Carlo (SMC) realization  of the distributed fusion with probability hypothesis density (PHD) filter was presented in \cite{Uney-2}.
 Meanwhile, the problem of DMMT with a Gaussian mixture cardinalized PHD (GM-CPHD) filter was addressed in \cite{Battistelli}. Furthermore, the work of distributed detection and tracking  with Bernoulli
 filter over a Doppler-shift sensor network was completed in \cite{Mehmet}.

In addition to the PHD and CPHD filters \cite{MeMBer_Mahler,P-Brcac,PHD-Vo,PHD_TBD,Vo-CPHD}, the multi-Bernoulli (MB) filter is also a promising multi-object tracking algorithm in the framework of random finite sets (RFS). Compared with PHD and CPHD filters, the MB filter can be more efficient and accurate in problems that require particle implementations or target individual existence probabilities. The reason is that the MB filter \cite{MeMBer_Vo2}, \cite{MeMber_Vo3}  directly propagates the multi-object distribution, not its moments. Furthermore, it  does not require an additional process, such as a clustering step, to extract the multi-object state estimate. MB filters have  been successfully applied in a host of practical problems. Examples include
radar tracking \cite{Vo_radar_target_nonline}, image tracking \cite{MeMber_Vo3,Gunes_sonar}, ground target  tracking \cite{Reza_ground_target}, sensor control \cite{Reza_sensor_control_AES,Reza_sensor_control_letter}, audio and video data tracking \cite{Reza_audio_visual}, visual tracking and cell tracking \cite{Reza_visual_tracking}, and mobile multi-object tracking \cite{MeMber_Wei1,MeMber_Wei2}. Novel extensions of the MB recursion have also been proposed in \cite{Dunne_MM} for multiple models, and hybrid multi-Bernoulli and Poisson multi-target filters were also proposed in \cite{Williams}. To the best of our knowledge, the problem of DMMT with  MB filter considering the unknown level of correlation among sensors has not been well addressed.
The challenge lies in intractability of deriving a  closed-form expression of G-CI fusion with MB distributions.

Recently, the notion of labeled RFS was introduced to address target trajectories and their uniqueness in \cite{LMB_Vo, LMB_Vo2, delta_GLMB, LRFS_Bear_Vo, LRFS_Papi_Kim, Fantacci-BN, GLMB_Papi_Vo}. To investigate the DMTT of labeled densities, Fantacci~\textit{et al.}~\cite{Fantacci-BT} derived the closed-form
solutions of GCI fusion with marginalized $\delta$-generalized labeled multi-Bernoulli (M$\delta$-GLMB) and labeled multi-Bernoulli (LMB) densities
based on the assumption that different sensors share the
same  label space. ``Sharing the same
space'' demands that not only label spaces from different sensors are the same numerically, but also the same element from different sensors has the same physical implication, or indeed denotes the same object. This assumption is hardly valid in practice, which is also referred to as the ``label space mismatching'' phenomenon and is analyzed in detail in \cite{GCI-GMB,GCI-LSM}. Wang~\textit{et al.}~\cite{GCI-GMB, GCI-LSM} have recently suggested two promising solutions to cope with the ``label space mismatching'' phenomenon.

In some applications, the labels of the object are of great importance; still there are many cases where one might say that ``a threat is a threat'' and  we have no interest in which target is which. For example, in collision avoidance systems, the objective is not to distinguish the identities of  cars, but to avoid them regardless of their identities. In such cases, the labeled multi-object density is not required. Moreover, when there is uncertainty in labelling the targets, e.g., in presence of closely spaced targets, the labeled posterior will be  multi-modal and this may  affect the  performance \cite{Set_JPDA}.  Hence, the unlabeled  filters still remain in current and widespread use. MB filter is a kind of  unlabeled filter and its successful  applications spin over many areas as mentioned above.  Hence, it is significant to explore the generalization of MB filter to distributed environment.

In this paper, the problem of DMMT with MB filters over a sensor network is investigated. The major contributions are two-fold:
\begin{enumerate}[1)]
\item \textit{We propose a distributed fusion algorithm, namely, GCI fusion with MB filter (GCI-MB). } A tractable closed-form formulation of GCI fusion with MB posteriors are obtained via two reasonable approximations.
\item \textit{We implement the proposed fusion  algorithm using SMC technique.} The main challenge with SMC implementation of the proposed MB-fusion solution is that  neither the support nor the number of particles are guaranteed to be the same in different sensor nodes. To this end, a Kernel Density Estimation method \cite{Uney-2, KDE-1, KDE-2} is employed to convert the local particle sets to a Gaussian Mixture model (GMM) obtaining a continuous approximation.
\end{enumerate}
In numerical results,  the performance of the proposed fusion algorithm with SMC implementation is verified.

Preliminary results have been announced in the conference paper \cite{GCI-MB-Con}. This paper presents a more complete theoretical and numerical study. In Section II an overview of  multi-object tracking with RFS and G-CI fusion rule is given. Section III describes our approach for DMMT. We firstly derive  the  closed-form expression of the fused posterior by approximating it as generalized multi-Bernoulli (GMB) distribution. Then we approximate the fused GMB posterior with an MB distribution with matching first-order statistical moment. In Section IV, we present the SMC based implementation of the proposed distributed fusion algorithm. The performance of the proposed algorithm is analyzed in two distributed multi-object tracking scenarios in Section V. Then,
some concluding remarks are given in Section VI.
\section{Background}
\subsection{Notation}

To admit arbitrary arguments like sets, vectors and integers, the generalized Kronecker delta function is given by
\begin{equation}\label{delta}
  \delta_{\mathbf{Y}}(\bX)\triangleq\left\{\begin{array}{l}
\!\!1,\,\,\,\, \mbox{if $\bX = \mathbf Y$} \\
\!\!0,\,\,\,\, \mbox{otherwise}
\end{array}\right.
\end{equation}
and the inclusion function is given by
\begin{equation}\label{inclusion function}
  1_{\mathbf{Y}}(\bX)\triangleq\left\{\begin{array}{l}
\!\!1, \,\,\,\,\mbox{if $\bX \subseteq \mathbf{Y}$}\\
\!\!0. \,\,\,\,\mbox{otherwise}
\end{array}\right.
\end{equation}

The vector integrals  on $\mathbb{X}$ are using the  standard inner product
notation. For functions $a(\bx)$ and $b(\bx)$ defined on $\mathbb{X}$, the inner product notation is represented as
$\big<a,b\big>=\int_\mathbb{X} a(\bx)b(\bx)d\bx$.
\subsection{Multi-object Bayesian Filter}
Finite Set Statistics (FISST) proposed by Mahler, has provided a rigorous and elegant mathematical framework for the multi-object detection,
tracking and classification problems in a unified Bayesian paradigm.

In the FISST framework, the multi-object state at time $k$ is naturally represented as an RFS $\bX^k=\{\bx^k_1,\bx^k_2,\ldots,\bx^k_{n}\}\in \mathcal{F}(\mathbb{X})$,
where $\mathbb{X}=\mathbb{R}^\nu$ is the single object state space with the dimension $\nu$, $\mathcal{F}(\mathbb{X})$ is the space of finite subsets of $\mathbb{X}$.
Each single object state $\bx^k_i=[{\mathbf{p}^{k}_{i}}^{\top}\ \ {\mathbf{v}^k_i}^{\top}]^{\top} \in \mathbb{R}^\nu$ comprises the positions $\mathbf{p}^k_i\in \mathbb{R}^{\nu/2}$
and velocities $\mathbf{v}^k_i\in \mathbb{R}^{\nu/2}$, where ``$^\top$'' denotes the matrix transpose.

Let $\bZ^{k}$ denotes the observation at time $k$ and $\bZ^{1:k}=(\bZ^{1},\ldots,\bZ^{k})$ denotes the history of observation from time $1$ to
time $k$. The optimal multi-object Bayesian filter propagates RFS based posterior density $\pi(\bX^k|\bZ^{1:k})$ conditioned on $\bZ^{1:k}$
in time with the following recursion  \cite{MeMBer_Mahler}:
\begin{align}
\label{Optimal-prediction}\pi(\bX^k|\bZ^{1:k-1})\!&=\!\int\! f(\bX^k|\bX^{k-1})\pi(\bX^{k-1}|\bZ^{1:k-1})\delta \bX^{k-1}\\
\label{Optimal-update}\pi(\bX^k|\bZ^{1:k})\!&=\!\frac{g(\bZ^k|\bX^k)\pi(\bX^k|\bZ^{1:k-1})}{\int g(\bZ^k|\bX^k)\pi(\bX^k|\bZ^{1:k-1})\delta \bX^{k}}
\end{align}
where $f(\bX^k|\bX^{k-1})$ is the multi-object Markov transition function,  $g(\bZ^k|\bX^k)$ is the multi-object likelihood function, and set integral is defined by~\cite{MeMBer_Mahler}
\begin{equation}\label{set integral}
  \int\! f(\bX)\delta \bX\!=\sum_{n=0}^\infty \frac{1}{n!}\int\! f(\{\bx_1,\cdots,\bx_n\})d\bx_1\cdots d\bx_n.
\end{equation}
\subsection{Multi-Bernoulli Distribution}
A random set $\textbf{X}$ with multi-Bernoulli (MB) distribution is defined as the union of $M$ independent Bernoulli random sets~$\bX^{(\ell)}$~\cite{Mahler-1},
\begin{equation}
\bX=\bigcup_{\ell=1}^{M}\bX^{(\ell)}.
\end{equation}
The MB distribution is completely characterized by a set of parameters $\{(r^{(\ell)},p^{(\ell)})\}_{\ell=1}^{M}$, where $r^{(\ell)}$ denotes the existence probability and $p^{(\ell)}(\cdot)$ denotes the probability density of the $\ell$-th Bernoulli random set. The multi-object probability density of an MB RFS is given by \cite{Mahler-1},
\begin{align}\label{multi-Bernoulli}
\begin{split}
&\pi(\left\{\mathbf{x}_1,\ldots,\mathbf{x}_n\right\})\\
&\,\,\,\,\,=\sum_{1\leq i^1\neq\ldots\neq i^n\leq M}Q^{(i^1,\cdots,i^n)}\prod_{j=1}^n p^{(i^{j})}(\bx_{j})
\end{split}
\end{align}
where
\begin{align}\label{Phi}
\begin{split}
Q^{(i^1,\cdots,i^n)}=\prod_{\ell=1}^{M}(1-r^{(\ell)})\prod_{j=1}^n\frac{r^{(i^{j})}}{1-r^{(i^{j})}}.
\end{split}
\end{align}

\subsection{$\delta$-Generalized Labeled Multi Bernoulli Distribution}
A $\delta$-Generalized Labeled Multi Bernoulli~($\delta$-GLMB) distribution is defined for labeled RFSs. It is parametrized as follows~\cite{LMB_Vo2}:
\begin{align}\label{delta-GLMB}
\begin{split}
\bpi(\bX)=\triangle(\bX)\sum_{(I,\xi)\in\mathcal{F}(\mathbb{L})\times\Xi}\omega^{(\xi)}(I)\ \delta_I(\mathcal{L}(\bX))[p^{(\xi)}]^\bX
\end{split}
\end{align}
where $\mathbb{L}$ is a discrete label space, $\Xi$ is a discrete space, $\xi$ denotes a point in the space $\Xi$, the factor $\triangle(\bX) = \delta_{|\bX|}(\mathcal{L}(\bX))$ is included to guarantee unique labels, each $p^{(\xi)}(\bx)$ is a probability density over the joint space of single-object states and labels (thus, also denoted by $p^{(\xi)}(x,\ell)$ or by $p^{(\ell,\xi)}(x)$, and each $\omega^{(\xi)}(I)$ is a non-negative weight. The weights are normalized,
\begin{equation}\label{omega}
  \sum_{(I,\xi)\in\mathcal{F}(\mathbb{L})\times\Xi}\omega^{(\xi)}(I)=1.
\end{equation}

An unlabeled $\delta$-GLMB distribution has the following general form~\cite{GCI-GMB}:
\begin{equation}\label{delta-GMB}
\pi(\{x_1,\ldots,x_n\})=\sum_{\sigma}\sum_{(I,\xi)\in\mathcal{F}_n(\mathbb{L})\times\Xi}\omega^{(I,\xi)}\prod_{i=1}^{n}p^{(\bI^v(i),\xi)}(x_{\sigma(i)})
\end{equation}
where $\mathcal{F}_n(\mathbb{L})$ is the space of finite subsets of $\mathbb{L}$ with cardinality  $n$, $\bII^v\in\mathbb{N}^{|I|}$ denotes the vector constructed by stacking the elements of $I$ in some sorted order,  $\sigma$ denotes a permutation of $\{1,\cdots,n\}$ and $\sum_{\sigma}$ denotes the sum over all such permutations. In this paper, we refer  to the unlabeled $\delta$-GLMB distribution as Generalized Multi-Bernoulli (GMB) distribution.

\subsection{Distributed Data Fusion}

Consider two nodes 1 and 2 in a sensor network. At time $k$, the nodes maintain their local posteriors $\pi_{1}(\bX^k|\bZ_{1}^{1:k})$ and $\pi_{2}(\bX^k|\bZ_{2}^{1:k})$ which are both labeled RFS multi-object densities. Node 1 transmits its posterior to node 2 where it is to be fused with node 2 local posterior to obtain a joint posterior denoted by
\begin{equation}\label{fused posterior}
  \pi_\omega(\bX^{k}|\bZ^{1:k}_1,\bZ^{1:k}_2)=  \pi_\omega(\bX^{k}|\bZ^{1:k}_1\cup\bZ^{1:k}_2)
\end{equation}
 where $\pi_\omega(\mathbf X^{k}|\mathbf Z^k_1,\mathbf Z^k_2)$ denotes the fused posterior of distributed fusion.
 It is important to note that common process noise arises whenever both nodes track the same target and common observation noise arises after the nodes exchange their local estimates with one another. Thus, in practical applications, $\pi_{1}(\bX^k|\bZ_{1}^{1:k})$ and $\pi_{2}(\bX^k|\bZ_{2}^{1:k})$ are not distribution of independent variables. Considering the unknown level of correlation among nodes, the following solution to the fusion problem was developed by Chong, Mori and Chang \cite{CY-Chong},
\begin{align}\label{optimal}
\begin{split}
\pi_{\omega}(\bX^k|\bZ_{1}^{1:k},\bZ_{2}^{1:k})\propto\frac{\pi_{1}(\bX^k|\bZ_{1}^{1:k})\pi_{2}(\bX^k|\bZ_{2}^{1:k})}
                                              {\pi(\bX^k|\bZ_{1}^{1:k}\cap \bZ_{2}^{1:k})}.
\end{split}
\end{align}
In many applications, the computation of posterior given common information between sensors, $\pi(\bx^k|\bZ_{1}^{1:k}\cap \bZ_{2}^{1:k})$, is not straightforward, and the above fusion rule cannot be easily implemented. To overcome this issue, the G-CI fusion rule, which specifically extends FISST to distributed environments, has been proposed by Mahler \cite{Mahler-1}. Under this generalization, the fused posterior is the geometric mean, or the exponential mixture of the local posteriors,
\begin{align}\label{G-CI}
\begin{split}
\!\!\!\pi_{\omega}(\bX^k|\bZ_1^{1:k},\bZ_{2}^{1:k})\!=\!\frac{\pi_{1}(\bX^k|\bZ_{1}^{1:k})^{\omega_1}\pi_{2}(\bX^k|\bZ_{2}^{1:k})^{\omega_2}}
                                              {\int \pi_{1}(\bX^k|\bZ_{1}^{1:k})^{\omega_1}\pi_{2}(\bX^k|\bZ_{2}^{1})^{\omega_2}\delta \bX}
\end{split}
\end{align}
where $\omega_1$, $\omega_2$ ($\omega_1+\omega_2=1$) are the parameters determining the relative fusion weight of each nodes.

The fused posterior given by equation (\ref{G-CI}) minimizes the weighted sum of its Kullback-Leibler divergence (KLD) \cite{Battistelli} with respect to  two given distributions,
\begin{equation}\label{EMD}
\begin{split}
  \pi_\omega=\arg \min_\pi(\omega_1D_{\emph{KL}}(\pi\parallel \pi_1)+\omega_2 D_{\emph{KL}}(\pi\parallel \pi_2))
  \end{split}
\end{equation}
where $D_{\emph{KL}}$ denotes the KLD defined as
\begin{equation}\label{KLD}
\begin{split}
 D_{\emph{KL}}(f||g)\triangleq \int f(\bX)\log{\frac{f(\bX)}{g(\bX)}}\delta\bx
  \end{split}
\end{equation}
where the integral in (\ref{KLD}) is generally a set integral. For convenience of notations, in what follows we omit explicit references to the time index $k$.

\section{Distributed Fusion with MB Filters }
In this section, we present a tractable closed-form solution for G-CI based distributed fusion of multi-Bernoulli posteriors that are locally formed in separate nodes of a sensor network. Each local sensor performs MB filtering  and outputs a MB posterior in the form of (\ref{multi-Bernoulli}). Depending on the type of the local measurement acquitted by the sensor node, the local MB filter may use various observation models such as point observation model~\cite{MeMBer_Vo2} or image observation model~\cite{MeMber_Vo3}. Through a practical approximation, we show that fusion of two MB posteriors using G-CI formula~(\ref{G-CI}) leads to a GMB-type multi-object density whose parameters can be directly calculated in terms of the two MB distribution parameters. We then approximate the fused GMB distribution with an MB distribution that has the same first moment. This can be fed back to the sensor network nodes for the next iteration of local MB filtering.

\subsection{G-CI Fusion}
When fusing MB distributions based on the G-CI fusion formula~(\ref{G-CI}), the main challenge is that for each MB distribution $\pi(\bX)$, the term  $\pi(\bX)^\omega$ has a form of fractional order exponential power of a sum, ${\left(\sum_{i=1}^n d_i \right)}^\omega$,  which is computationally intractable. Its value could be approximated  using numerical solutions, such as grid based approximation. However, this approach suffers from the curse of dimensionality and is prohibitively expensive in general. Therefore, a feasible and practical approximation of ${\pi(\bX)}^\omega$  is required.

In \cite{Battistelli,July}, the following approximation has been introduced to calculate  ${\pi(\bX)}^{\omega}$ where $\pi(\bX)$ is a single-object distribution formulated as a mixture of \textit{well separated} Gaussian components:
\begin{align}\label{generalized_approximation}
\begin{split}
\left(\small\sum_{i} d_i\right)^\omega\approx\small\sum_i d_i^\omega.
\end{split}
\end{align}
In the following, we derive a similar approximation for a multi-object distribution  ${\pi(\bX)}$ that is formulated as an MB distribution which is the union of \textit{well separated} Bernoulli components. Our derivation also clarifies what being \textit{well separated} means for Bernoulli components of the MB distribution.

In order to make the derivations presented in this section more compact, we represent the MB distribution in (\ref{multi-Bernoulli}) in another form.
For each cardinality $n \leq M$, we denote the ensemble of all possible ordered combinations of $n$ distinct indices between 1 and $M$, by the summing joint-index space $H(n)$,
\begin{equation}\label{P_MN}
H(n)=\{(i^1,\cdots,i^n)\in\mathbb{N}^n |1\leq i^1\neq \cdots\neq i^n\leq M\},
\end{equation}
where $\mathbb{N}$ is referred to as the set of all natural numbers.

Using this notation, the MB distribution (\ref{multi-Bernoulli}) can be rewritten as
\begin{equation}\label{multi-Bernoulli_V3}
\pi(\{\bx_1,\cdots,\bx_n\})=\sum_{\bII_h\in H(n)}Q^{\bII_h}\prod_{i=1}^n p^{(\bII_h(i))}(\bx_i).
\end{equation}
Therefore,
\begin{equation}\label{term_w}
\begin{split}
\pi(\{\bx_1,\cdots,\bx_n\})^\omega={\left(\sum_{\bII_h\in H(n)}Q^{\bII_h}\prod_{i=1}^n p^{(\bII_h(i))}(\bx_i)\right)}^\omega.
\end{split}
\end{equation}

We will show that is the Bernoulli components of the MB distribution are well-separated, the powered sum presented in the above equation can be approximated by the sum of powers. Firstly, we introduce the concept of highest posterior density (HPD) region \cite{HPD}, which is important for the derivation that follows.
\begin{Def}
\normalfont{
Let $p(X|{Z})$ be a posterior density function. A region $R$ in the space of $X$ is called an HPD of confidence $\lambda$ if
\begin{enumerate}[a)]
\item $\Pr\{X\in R|{Z}\}=\lambda$;
\item for $X_1\in R$ and $X_2\notin R$,
\begin{equation}\label{HPD}
\begin{split}
p(X_1|{Z})\geq p(X_2|{Z}).
\end{split}
\end{equation}
\end{enumerate}
}
\end{Def}
 The posterior density for every point inside the HPD region is greater than that for every point outside of region. Thus, the region includes the more probable values of $X$.  Usually, the confidence $\lambda$ is set to be very close to one, e.g. $\lambda=0.90$. Thus, $p(X|Z)$ is negligible for $X\notin R$ and can be approximated~with~0.

\begin{Def}
\normalfont
Consider an MB posterior $\pi=\left\{\left(r^{(\ell)},p^{(\ell)}(\cdot)\right)\right\}_{\ell=1}^M$. If $\mathbb{X}_\ell$ is the HPD of confidence $\lambda$ for $p^{(\ell)}(\cdot)$, then the Bernoulli components of $\pi(\bX)$ are said to be  \underline{mutually $\lambda\times 100\%$  separated} if,
$$
\forall\ell\neq\ell^{\prime},  \ \ \mathbb{X}_{\ell}\cap\mathbb{X}_{\ell^{^{\prime}}}=\emptyset.
$$
\end{Def}

\begin{Rem}
\normalfont{
In practical multi-object tracking scenarios, the HPD of posterior $p^{(\ell)}(\bx)$ is influenced by many factors, e.g., the true target states, the maneuverability and signal-to-noise (SNR) of targets.  Usually the true single target state corresponding to each Bernoulli component determines the center of its HPD region. Furthermore, the width of HPD region of a Bernoulli component is smaller with lower maneuverability and higher SNR. In such practical scenarios, the MB distributions propagated through an MB filter (and G-CI fused density in sensor network applications) can be easily assumed to be mutually separated with very high confidence.}
\end{Rem}

\begin{Rem}
\normalfont{
In common SMC implementations of the MB filter, there is a merging step after update, in which the Bernoulli components whose means are too close to each other are merged into one Bernoulli component. Thus, we can practically assume that the posteriors that are to be fused in the sensor network are always well-separated ($\lambda$ is very close to one).
}
\end{Rem}

\begin{Pro}\label{P1}
Assume that the Bernoulli components of an MB posterior density, denoted by $\pi=\left\{\left(r^{(\ell)},p^{(\ell)}(\cdot)\right)\right\}_{\ell=1}^M$, are mutually $\lambda\times 100\%$ separated. Denote the HPD of confidence $\lambda$ for $p^{(\ell)}(\bx)$ by $\mathbb{X}_{\ell}$. For an indexing sequence $\bI_h \in H(n)$, consider the multi-variate posterior $\mathfrak{p}(\bx_{1:n};\bI_h) = \prod_{i=1}^{n}p^{(\bI_h(i))}(\bx_i)$ where $\bx_{1:n}$ denotes $(\bx_1,\ldots,\bx_n)$. If the confidence level $\lambda$ is close to one, then the HPD of confidence $\lambda^n$ for $\mathfrak{p}(\bx_{1:n};\bI_h)$ can be approximated with
$$
\mathbb{X}_{\bI_h} \approx \widehat{\mathbb{X}}_{\bII_h} = \mathbb{X}_{\bI_h(1)} \times \mathbb{X}_{\bI_h(2)} \times \cdots \times \mathbb{X}_{\bI_h(n)}.
$$
\end{Pro}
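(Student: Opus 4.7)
The plan is to verify, within the level of approximation introduced in Remark 1, that $\widehat{\mathbb{X}}_{\bI_h}$ meets both defining conditions of a $\lambda^n$-HPD region for $\mathfrak{p}(\bx_{1:n};\bI_h)$ listed in Definition 1, namely the mass condition (a) and the density-ordering condition (b).

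First I would establish condition (a). Because the posterior $\mathfrak{p}(\bx_{1:n};\bI_h)=\prod_{i=1}^{n}p^{(\bI_h(i))}(x_i)$ factorizes and $\widehat{\mathbb{X}}_{\bI_h}$ is itself a Cartesian product, Fubini's theorem gives
$$
\int_{\widehat{\mathbb{X}}_{\bI_h}}\!\mathfrak{p}(\bx_{1:n};\bI_h)\,d\bx_1\cdots d\bx_n \;=\; \prod_{i=1}^{n}\int_{\mathbb{X}_{\bI_h(i)}} p^{(\bI_h(i))}(x_i)\,dx_i \;=\; \lambda^n,
$$
since by hypothesis each $\mathbb{X}_{\bI_h(i)}$ is the $\lambda$-HPD of its corresponding single-object density.

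Next I would address condition (b). Pick any $\bx_{1:n}\in\widehat{\mathbb{X}}_{\bI_h}$ and any $\by_{1:n}\notin\widehat{\mathbb{X}}_{\bI_h}$. By the definition of the Cartesian product, $x_i\in\mathbb{X}_{\bI_h(i)}$ for every $i$, while there exists at least one index $j$ with $y_j\notin\mathbb{X}_{\bI_h(j)}$. Invoking Remark 1 together with the assumption that $\lambda$ is close to one, $p^{(\bI_h(j))}(y_j)\approx 0$, so that the whole product $\mathfrak{p}(\by_{1:n};\bI_h)$ is negligible. By contrast, $\mathfrak{p}(\bx_{1:n};\bI_h)$ is a product of $n$ densities each evaluated inside its respective $\lambda$-HPD and is therefore non-negligible. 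Condition (b), $\mathfrak{p}(\bx_{1:n};\bI_h)\geq\mathfrak{p}(\by_{1:n};\bI_h)$, accordingly holds in the same approximate sense in which Remark 1 treats outside-HPD density as zero.

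The main obstacle is that condition (b) is not a strict pointwise inequality but only holds in the approximate sense inherited from Remark 1: a point $\by_{1:n}$ that leaves $\widehat{\mathbb{X}}_{\bI_h}$ through only a single coordinate may still carry non-trivial density if $\lambda$ is not sufficiently close to one. I would therefore emphasize in the proof that the $\approx$ in the statement of the proposition is precisely the approximation appearing in Remark 1, and record the auxiliary observation—needed anyway for subsequent results that build on this proposition—that the mutual separation hypothesis, combined with the distinctness of the entries of $\bI_h$, ensures the $n$ factor sets $\mathbb{X}_{\bI_h(i)}$ are pairwise disjoint, so that $\widehat{\mathbb{X}}_{\bI_h}$ is an unambiguously defined product region and its mass decomposes cleanly as above.
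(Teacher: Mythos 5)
Your proposal is correct and follows essentially the same route as the paper: the mass condition is verified by the identical Fubini factorization yielding $\lambda^n$, and the density-ordering condition is argued in the same approximate sense, by noting that any $\mathbf{y}_{1:n}$ exiting the product region has at least one factor evaluated outside its HPD where the density is negligible. The paper's treatment of condition (b) is marginally more careful—it uses the pointwise HPD inequality $p^{(\bI_h(i))}(\bx_i)\geq p^{(\bI_h(i))}(\mathbf{y}_i)$ on the exiting coordinates and argues the remaining factors cannot compensate, rather than asserting the inside product is non-negligible—but both arguments operate at the same heuristic level and your stated caveat matches the paper's ``holds for almost all pairs'' qualification.
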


\begin{proof}
The probability associated with the above HPD is given by:
\begin{eqnarray}
\nonumber
\Pr(\bx_{1:n}\in \widehat{\mathbb{X}}_{\bII_h}) & = & \int_{\widehat{\mathbb{X}}_{\bII_h}} \mathfrak{p}(\bx_{1:n};\bI_h) d\bx_{1:n} \\
\nonumber
& = & \int_{\widehat{\mathbb{X}}_{\bII_h}} \prod_{i=1}^{n}p^{(\bI_h(i))}(\bx_i) d\bx_i\\
\nonumber
& = & \prod_{i=1}^{n} \int_{\mathbb{X}_{\bI_h(i)}} p^{(\bI_h(i))}(\bx_i) d\bx_i\\
\nonumber
& = & \prod_{i=1}^{n} \lambda\\
& = & \lambda^n.
\end{eqnarray}
Furthermore, consider two $n$-tuples $\bx_{1:n} \in \widehat{\mathbb{X}}_{\bII_h}$ and $\mathbf{y}_{1:n} \notin \widehat{\mathbb{X}}_{\bII_h}$. We argue that the condition $\mathfrak{p}(\bx_{1:n};\bI_h) \geqslant \mathfrak{p}(\mathbf{y}_{1:n};\bI_h)$ holds for almost all possible pairs of $\bx_{1:n},\mathbf{y}_{1:n}$. Without loss of generality, let us assume that the first $\mathfrak{n} \leqslant n$ elements of $\mathbf{y}_{1:n}$ are not in their correspondent HPD regions and the rest are. The inverse condition  $\mathfrak{p}(\bx_{1:n};\bI_h) < \mathfrak{p}(\mathbf{y}_{1:n};\bI_h)$ can be rewritten as
$$
\prod_{i=1}^{n}p^{(\bI_h(i))}(\bx_i) < \prod_{i=1}^{n}p^{(\bI_h(i))}(\mathbf{y}_i).
$$
We note that for $i = 1, \ldots, \mathfrak{n}$, $p^{(\bI_h(i))}(\bx_i) \geqslant p^{(\bI_h(i))}(\mathbf{y}_i)$. Thus, in order for the above inverse condition to hold, the rest of the elements of $\mathbf{y}_{1:n}$ must associate with densities much larger than the ones at $\bx_{1:n}$ in such a way that when multiplied by the first $\mathfrak{n}$ densities, the product still becomes larger than the product of densities associated with elements of $\bx_{1:n}$. We argue that when the confidence level $\lambda$ is large, the bulk mass of distribution is covered by the HPD and the densities associated with values outside the HPD are expected to be negligible. Thus, the product $\prod_{i=1}^{\mathfrak{n}} p^{(\bI_h(i))}(\mathbf{y}_i)$ is expected to be so small that its product with the rest of the terms can rarely become large enough to exceed the total product of densities at $\bx_i$'s, $\prod_{i=1}^{n}p^{(\bI_h(i))}(\bx_i)$. More precisely, the above inverse condition can rarely be held, and
$$
\mathfrak{p}(\bx_{1:n};\bI_h) \geqslant \mathfrak{p}(\mathbf{y}_{1:n};\bI_h)
$$
holds for almost all possible pairs of $\bx_{1:n} \in \widehat{\mathbb{X}}_{\bII_h}$ and $\mathbf{y}_{1:n} \notin \widehat{\mathbb{X}}_{\bII_h}$.
\end{proof}

\begin{Pro}\label{P2}
If the Bernoulli components of an MB posterior density denoted by $\pi=\left\{\left(r^{(\ell)},p^{(\ell)}(\cdot)\right)\right\}_{\ell=1}^M$ are mutually $\lambda\times 100\%$ separated and $\lambda$ is very close to $1$ (e.g. $\lambda \geqslant 0.9$), then
\begin{equation}\label{fuse-2}
\begin{split}
{{\pi}(\!\left\{\bx_1\!,\!\ldots\!,\!\bx_n\right\}\!)}^{\omega}
\approx\sum_{\bII_h\in H(n)}\left({Q^{\bII_h}}\right)^{\omega}\left(\prod_{i=1}^{n}p^{(\bII_h(i))}(\bx_i)\right)^{\omega}.
\end{split}
\end{equation}
\end{Pro}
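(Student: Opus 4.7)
The plan is to build directly on Proposition~\ref{P1} and the pairwise disjointness of the HPD regions of the individual Bernoulli components. Writing $a_{\bII_h}(\bx_{1:n}) = Q^{\bII_h}\prod_{i=1}^n p^{(\bII_h(i))}(\bx_i)$, equation~(\ref{term_w}) becomes $\pi(\{\bx_1,\ldots,\bx_n\})^{\omega} = \bigl(\sum_{\bII_h\in H(n)} a_{\bII_h}(\bx_{1:n})\bigr)^{\omega}$, so the claim reduces to a pointwise ``power-of-sum equals sum-of-powers'' identity $\bigl(\sum_{\bII_h} a_{\bII_h}\bigr)^\omega \approx \sum_{\bII_h} a_{\bII_h}^\omega$, since $a_{\bII_h}^\omega = (Q^{\bII_h})^\omega \prod_i (p^{(\bII_h(i))}(\bx_i))^\omega$ is already the summand on the right side of~(\ref{fuse-2}).

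First I would show that the product HPDs $\widehat{\mathbb{X}}_{\bII_h}$ from Proposition~\ref{P1} are pairwise disjoint across $\bII_h \in H(n)$: for any $\bII_h \neq \bII_h'$ there must exist a coordinate $i$ at which $\bII_h(i) \neq \bII_h'(i)$, and mutual $\lambda \times 100\%$ separation forces $\mathbb{X}_{\bII_h(i)} \cap \mathbb{X}_{\bII_h'(i)} = \emptyset$, so the Cartesian products $\widehat{\mathbb{X}}_{\bII_h}$ and $\widehat{\mathbb{X}}_{\bII_h'}$ cannot intersect. Then I would split on the location of $\bx_{1:n}$. If $\bx_{1:n} \in \widehat{\mathbb{X}}_{\bII_h^{\star}}$ for a distinguished index $\bII_h^{\star}$, then for every other $\bII_h' \neq \bII_h^{\star}$ there is some coordinate at which $\bx_i$ lies outside $\mathbb{X}_{\bII_h'(i)}$, where $p^{(\bII_h'(i))}(\bx_i)$ is negligible by the HPD approximation, so the whole product $a_{\bII_h'}(\bx_{1:n})$ is negligible. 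Hence $\sum_{\bII_h} a_{\bII_h}$ is dominated by the single term $a_{\bII_h^{\star}}$, and the same dominance survives after raising each term to the power $\omega$, giving $\bigl(\sum_{\bII_h} a_{\bII_h}\bigr)^\omega \approx (a_{\bII_h^{\star}})^\omega \approx \sum_{\bII_h} a_{\bII_h}^\omega$. For $\bx_{1:n}$ outside every $\widehat{\mathbb{X}}_{\bII_h}$, every term and its $\omega$-power is negligible, so both sides are approximately zero.

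The main obstacle I expect is making the phrase ``negligible to the power $\omega$'' quantitatively precise, because $\omega \in (0,1)$ amplifies small values: a tail density of order $\epsilon$ is compressed only to $\epsilon^\omega$, which decays much more slowly. To keep the cross-terms genuinely subdominant after exponentiation, one really needs that the ratio between the peak density of each $p^{(\ell)}$ and its value outside its HPD is astronomically large, not merely that the HPD carries probability $\lambda$ close to one. This is plausible under the high-SNR, low-maneuverability regime emphasized in Remarks~1 and~2, but a careful treatment would replace the qualitative ``$\lambda$ close to $1$'' condition with an explicit quantitative bound on the peak-to-tail ratio, and would also acknowledge that the effective joint-confidence level $\lambda^n$ degrades with cardinality $n$, so the approximation is tightest when only a moderate number of targets are simultaneously present and well separated.
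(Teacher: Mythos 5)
Your proof follows essentially the same route as the paper's: pairwise disjointness of the product HPD regions obtained from Proposition~1, a case split on whether $\bx_{1:n}$ lies in one of those regions, single-term dominance of the sum inside a region, and negligibility of every term outside all regions. Your closing caveat---that $\omega\in(0,1)$ compresses a tail value $\epsilon$ only to $\epsilon^{\omega}$, so the dominance must survive exponentiation and one really needs a large peak-to-tail density ratio rather than merely $\lambda$ close to one---identifies a genuine subtlety that the paper's own proof also passes over without quantification, so it strengthens rather than departs from the published argument.
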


\begin{proof}
From proposition~\ref{P1}, if $\mathbb{X}_{\ell}$  is the HPD region of $p^{(\ell)}(\bx)$ with confidence $\lambda$, for each cardinality $n$, the HPD region $\mathbb{X}_{\bII_h}$ of each product term $\prod_{i=1}^n p^{(\bII_h(i))}(\bx_i), \bII_h\in H(n)$ with confidence $\lambda^n$  can be approximately represented as
\begin{equation}
 {\mathbb{X}}_{\bII_h} \approx \widehat{\mathbb{X}}_{\bII_h} = \mathbb{X}_{\bII_h(1)}\times\cdots\times\mathbb{X}_{\bII_h(n)}.
\end{equation}

Note that since the $\bx_n$s are  mutually orthogonal, it is reasonable and convenient to use the $ \widehat{\mathbb{X}}_{\bII_h}$ whose geometric shape is rule,  to approximate the true $\mathbb{X}_{\bII_h}$.

Given the  applicable conditions  that  $\forall\ell\neq\ell^{\prime},  1\leq \ell,\ell' \leq M$, $\mathbb{X}_{\ell}\cap\mathbb{X}_{\ell^{^{\prime}}}=\emptyset$,  we have
\begin{equation}\label{intersection_1}
\begin{split}
\widehat{\mathbb{X}}_{\bII_h}\cap\widehat{\mathbb{X}}_{\bII_h'}=\emptyset, \,\,\forall \bII_h\neq\bII_h',\,\,\bII_h,\bII_h'\in H(n).
\end{split}
\end{equation}

If the single-object state space is denoted by $\mathbb{X}$, the multi-object state space with cardinality $n$ will be $\mathbb{X}^n$. For any $\bx_{1:n}\in \mathbb{X}^n$, we consider two possible cases:
\begin{itemize}
\item[-] If for some $\bI_h$, $\bx_{1:n}\in \widehat{\mathbb{X}}_{\bI_h}$, then from~(\ref{intersection_1}), it cannot be in any other HPD region $\widehat{\mathbb{X}}_{\bI_{h'}},~\bI_{h'} \neq \bI_h.$ Thus, among the product terms $\prod_{i=1}^n p^{(\bII_h(i))}(\bx_i)$ that appear in the sum of RHS of equation~(\ref{multi-Bernoulli_V3}), only one of them will be dominant and the others will have negligible values, i.e.
\begin{equation}\label{app_1}
\begin{split}
&{\pi(\!\left\{\bx_1\!,\!\ldots\!,\!\bx_n\right\}\!)}
\approx Q^{\bII_h}\prod_{i=1}^n p^{(\bII_h(i))}(\bx_i)
\end{split}
\end{equation}
and therefore,
\begin{equation}\label{approx-0}
\begin{split}
&{{\pi}(\!\left\{\bx_1\!,\!\ldots\!,\!\bx_n\right\}\!)}^{\omega}\approx
\left({Q^{\bII_h}}\right)^{\omega}\left(\prod_{i=1}^{n}p^{(\bII_h(i))}(\bx_i)\right)^{\omega}.
\end{split}
\end{equation}

It is important to note that the term $Q^{\bII_h}$ is the probability of joint existence of targets with labels $\bII_h$, thus $0 \leqslant Q^{\bII_h} \leqslant 1.$ This probability term $Q^{\bII_h}$ itself can be smaller than some of the probability terms for other labels, i.e. for some $\bI_{h'}$, we may have $Q^{\bII_h} < Q^{\bII_{h'}}$. However, for those other terms, the product of densities would be so small that $Q^{\bII_h}\prod_{i=1}^n p^{(\bII_h(i))}(\bx_i)$ would be still much larger than other terms with other indices $\bII_{h'}$.
\item[-] If the multi-object state value $\bx_{1:n}$ is in none of the HPD spaces $\{\widehat{\mathbb{X}}_{\bI_{h}}\}_{\bI_h \in H(n)}$, then all the product terms appearing in the sum of RHS of equation~(\ref{multi-Bernoulli_V3}) will be negligible, and the multi-object density at $\bx_{1:n}$ will be very close to zero. Accuracy of approximation of multi-object density is not of interest in such locations in the multi-object state space $\mathbb{X}^n$.
\end{itemize}

For an arbitrary multi-object state value $\bx_{1:n} = (\bx_1,\ldots,\bx_n)$, equation~(\ref{approx-0}) can be generalized to
\begin{equation}\label{approx-1}
\begin{split}
&{{\pi}(\!\left\{\bx_1\!,\!\ldots\!,\!\bx_n\right\}\!)}^{\omega}\approx
 \sum_{\bII_h\in H(n)}\left({Q^{\bII_h}}\right)^{\omega}\left(\prod_{i=1}^{n}p^{(\bII_h(i))}(\bx_i)\right)^{\omega}
\end{split}
\end{equation}
in which only one term from sum is dominant, depending on which HPD region $\bx_{1:n}$ belongs to.
\end{proof}

To show a numerical example and demonstrate the intuition behind this approximation, let us consider an  MB distribution
with three Bernoulli components  with probabilities of existence $r_1^{(1)}=0.8$,  $r_1^{(2)}=0.9$ and $r_1^{(3)}=0.9$, and densities $p_1^{(1)}(x)\sim\mathcal{N}(x;3,0.2)$, $p_1^{(2)}(x)\sim\mathcal{N}(x;4,0.2)$ and $p_1^{(3)}(x)\sim\mathcal{N}(x;7,0.2)$, with $x\in \mathbb{R}$. Since the densities are Gaussians characterized by their mean and covariance, the necessary condition of the MB posterior being well-separated is reduced to the means $\int \mathbf{x} p^{(\ell)}(\mathbf{x}) d\mathbf{x}$ and $\int \mathbf{x} p^{(\ell')}(\mathbf{x}) d\mathbf{x}$ for any $\ell\neq\ell'$, being well-separated as measured by their respective covariances.  Fig.~\ref{fig_sim} shows numerical values of the product terms for two hypotheses, one with cardinality $n=1$, and one with cardinality $n=2$. The figure clearly exemplifies how one product terms can significantly dominate the others, validating the accuracy of approximation~(\ref{approx-1}).

\begin{figure}[h]
\centering
\includegraphics[width=3.5in]{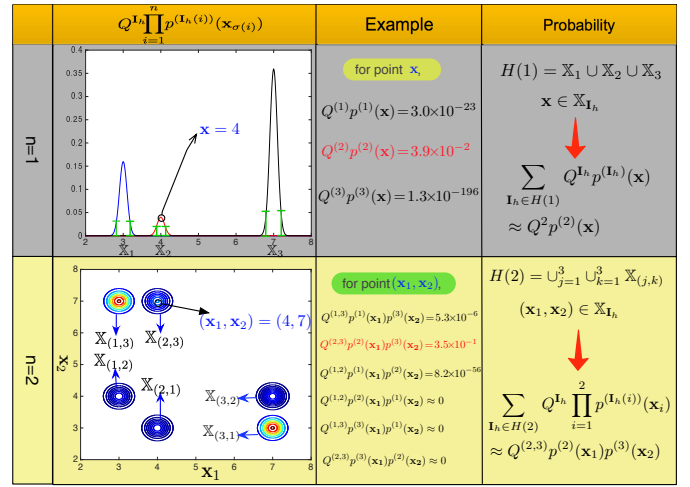} 
\caption{Example of an MB distribution with three components, and numerical values of the product terms for an $n=1$ dimensional hypothesis $\bX = \{4\}$ and an $n=2$ dimensional hypothesis $\bx = \{4,7\}$. The results show how in each case, regardless of hypothesized dimension, one product term in the sum formed by MB density in~(\ref{multi-Bernoulli_V3}) becomes much larger than others.}
\label{fig_sim}
\end{figure}

We note that the validity of the approximation in~(\ref{multi-Bernoulli_V3}) is not limited to Gaussian models. In the performance assessment section, the approximation is applied to the distributed multi-object tracking scenarios in which the multi-target posterior is not necessarily Gaussian distribution, and the results verify the validity and rationality of the approximation.

\subsection{The GMB Fused Distribution}
In this section, we present the MB distribution in a third form and define a fusion map describing the relationship between track outputs of two MB filters operating at two sensor nodes,  in order to explore the intuitionistic  mathematical structure of the fused distributions.

In addition to (\ref{multi-Bernoulli}) and (\ref{multi-Bernoulli_V3}), the third form of MB distribution could be expressed as
\begin{equation}\label{multi-Bernoulli_V2}
\begin{split}
\pi&(\!\left\{\bx_1\!,\!\ldots\!,\!\bx_n\right\}\!)
=\sum_{\sigma}\sum_{I\in \mathcal{F}_n(\mathbb{L})}Q^{I}\prod_{i=1}^{n}p^{(\bII^v(i))}(\bx_{\sigma(i)})
\end{split}
\end{equation}
where
 \begin{align}\label{Phi2_1}
\begin{split}
&Q^{I}=\prod_{\ell'\in I}{r^{(\ell')}}\prod_{\ell\in \mathbb{L}/I}(1-r^{(\ell)})
\end{split}
\end{align}
and $\mathbb{L}\!\triangleq\!\{1,\ldots,M\}$ is the index set of the MB distribution.

Consider two posteriors output by two sensors $s=1,\,2$, parametrized by $\pi_s=\{(r_s^{(\ell)},p_s^{(\ell)})\}_{\ell\in\mathbb{L}_{s}},\,\,s=1,\,2$, with $\mathbb{L}_s=\{1,\cdots,M_s\}$. Omitting the conditioning on the observations for convenience, we represent $\pi_s$ in the form of (\ref{MB-sensor}) as
\begin{align}\label{MB-sensor}
\pi_s=&\sum_{\sigma_s}\sum_{I_s\in \mathcal{F}_n(\mathbb{L}_s)}Q^{I_s}\prod_{i=1}^{n}p^{(\bII_s^v(i))}(\bx_{\sigma(i)}),\,\,\,s=1,\,2.
\end{align}

\begin{Def}
Without loss of generality, assume that  $|\mathbb{L}_1|\leq|\mathbb{L}_2|$. A fusion map is a function $\theta: I\in\!\!\mathcal{F}(\mathbb{L}_1)\!\rightarrow \!\mathbb{L}_2$ such that  $\theta(\ell)\!=\!\theta(\ell^{\ast})\!\!>\!\!0$
implies $\ell\!=\!\ell^{\ast}$. The set of all such fusion maps  is called fusion map space of $I$ denoted by $\Theta_{I}$, and the number of all fusion maps of $I$ is $A_{|I|}^{M_2}$, where $A_{N}^{M}$ denotes $N$-permutations of $M$. For notation convenience, we define $ \theta(I)\triangleq\{\theta(\ell), \ell\in I\}$.
\end{Def}
\begin{figure}[H]
\centering
\includegraphics[width=3.5in]{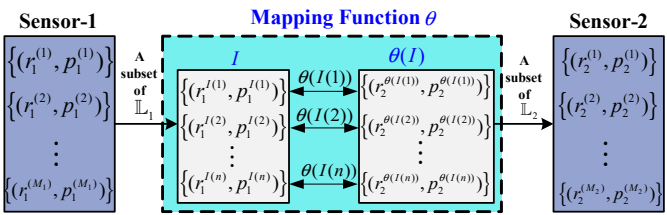} 
\caption{The sketches of the fusion map defined in Definition 1.  For any subset $I\in \mathcal{F}(\mathbb{L}_1)$,
         there is a subset $\theta(I) \in \mathcal{F}(\mathbb{L}_2)$ whose elements are one-to-one matching with the elements of $I$.}
\label{fig_fusion_map}
\end{figure}
\begin{Rem}
 \normalfont{
 Each fusion map denotes a hypothesis that a set of tracks in sensor 2 are one-to-one matching with
a set of tracks  in sensor 1 in the sense that the  matched tracks belong to the same targets, which is shown as in Fig.~\ref{fig_fusion_map}. The fusion map plays a similar role to the measurement-track association map in $\delta$-GLMB filter \cite{delta_GLMB}. For instance, consider two sensors, and their posteriors are $\{(r_1^{(\ell)},p_1^{(\ell)})\}_{\ell\in\mathbb{L}_1}$ and $\{(r_2^{(\ell)},p_2^{(\ell)})\}_{\ell\in\mathbb{L}_2}$, respectively, where $\mathbb{L}_1=\{1,2\}$ and $\mathbb{L}_2=\{1,2\}$. According to the Definition 3, there exist six fusion maps which are shown as in Fig.~\ref{fig_fusion_map_example}.
}
\end{Rem}

\begin{figure}[htbp]
\centering
\includegraphics[width=2.8in]{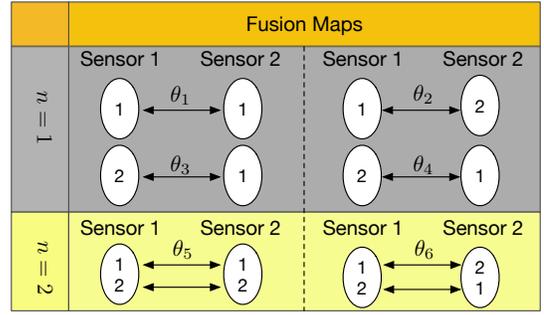}
\caption{\small{ An example of fusion maps.}}
\label{fig_fusion_map_example}
\end{figure}


\begin{Pro}\label{P2}
The EMD $\pi_\omega(\bX)$ of two MB distributions in (\ref{MB-sensor}) can be approximated as a GMB distribution of the form
\begin{align}\label{fuse-1}
\begin{split}
&\widetilde{\pi}_\omega(\left\{\bx_1,\ldots,\bx_n\right\})=\\&\sum_{\sigma}\sum_{(I_1,\theta)\in \mathcal{F}_n(\mathbb{L}_1) \times\Theta_{I_1}}w_\omega^{(I_1,\theta)}\prod_{i=1}^{n}p_\omega^{(\bII_1^v(i),\theta)}(\bx_{\sigma(i)})
\end{split}
\end{align}
where
\begin{align}
\label{fuse-w}
w_\omega^{(I_1,\theta)}&=\widetilde{w}_\omega^{(I_1,\theta)}\bigg{/} C\\
\label{fuse-p_p}p_{\omega}^{(\ell,\theta)}(\bx)&=\frac{p_1^{(\ell)}(\bx)^{\omega_1}p_2^{(\theta(\ell))}(\bx)^{\omega_2}}{Z_\omega^{(\ell,\theta)}},\,\,\,\ell\in I_1, \theta\in\Theta_{I_1}
\end{align}
with
\begin{align}
\label{Z_w} Z_\omega^{(\ell,\theta)}&=\int p_1^{(\ell)}(\mathbf{x})^{\omega_1}p_2^{(\theta(\ell))}(\mathbf{x})^{\omega_2}d\mathbf{x}\\
\label{fuse-w_p}\widetilde{w}_\omega^{(I_1,\theta)}&=\left(Q_1^{I_1}\right)^{\omega_1}\left(Q_2^{\theta(I_1)}\right)^{\omega_2}\prod_{\ell\in I_1}Z_\omega^{(\ell,\theta)}\\
\label{K-1}C&=\sum_{I_1\in \mathcal{F}(\mathbb{L}_1)}\sum_{\theta\in\Theta_{I_1}}\widetilde{w}_\omega^{(I_1,\theta)}.
\end{align}
\end{Pro}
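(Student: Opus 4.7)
The plan is to start from the G-CI fusion rule (\ref{G-CI}), substitute into both factors $\pi_s(\bX)^{\omega_s}$ the approximation (\ref{fuse-2}) applied to the compact form (\ref{multi-Bernoulli_V3}), multiply, and finally reindex the resulting double sum so that it matches the fusion-map GMB structure of (\ref{fuse-1}). Concretely, after the substitution the unnormalized fused density becomes a double sum over ordered index sequences $\bII_{h,1}\in H_1(n)$ and $\bII_{h,2}\in H_2(n)$, with generic summand
\begin{equation*}
\bigl(Q_1^{\bII_{h,1}}\bigr)^{\omega_1}\bigl(Q_2^{\bII_{h,2}}\bigr)^{\omega_2}\prod_{i=1}^{n}p_1^{(\bII_{h,1}(i))}(\bx_i)^{\omega_1}\,p_2^{(\bII_{h,2}(i))}(\bx_i)^{\omega_2}.
\end{equation*}
At each position $i$, the pointwise product $p_1^{(\ell)}(\bx)^{\omega_1}p_2^{(\ell')}(\bx)^{\omega_2}$ is then rewritten, via (\ref{fuse-p_p})--(\ref{Z_w}), as $Z_\omega^{(\ell,\theta)}\,p_\omega^{(\ell,\theta)}(\bx)$ with $\theta(\ell)=\ell'$. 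Pulling the $Z_\omega$ factors out of the product leaves each summand in the form of a scalar weight times a product of fused single-object densities.

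The core bookkeeping step is to recast the double sum over $(\bII_{h,1},\bII_{h,2})$ as a triple sum over a permutation $\sigma$ of $\{1,\ldots,n\}$, an unordered set $I_1\in\mathcal{F}_n(\mathbb{L}_1)$, and a fusion map $\theta\in\Theta_{I_1}$. I would decompose $\bII_{h,1}$ into $(I_1,\sigma)$---its unordered support together with the permutation encoding its order, which is exactly the $(\bII_1^v,\sigma)$ split used when going from (\ref{multi-Bernoulli}) to (\ref{delta-GMB})---and view the parallel sequence $\bII_{h,2}$ as specifying a unique injection $\theta:I_1\to\mathbb{L}_2$ via the position-by-position correspondence with $\bII_{h,1}$. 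Because $Q^I$ in (\ref{Phi2_1}) is symmetric in the entries of $I$, we have $Q_1^{\bII_{h,1}}=Q_1^{I_1}$ and $Q_2^{\bII_{h,2}}=Q_2^{\theta(I_1)}$, and likewise $\prod_{i=1}^{n}Z_\omega^{(\bII_{h,1}(i),\theta)}=\prod_{\ell\in I_1}Z_\omega^{(\ell,\theta)}$ is $\sigma$-independent. Grouping these three factors reproduces exactly the unnormalized weight $\widetilde{w}_\omega^{(I_1,\theta)}$ of (\ref{fuse-w_p}), while the entire $\sigma$-dependence is confined to the density product $\prod_{i=1}^{n} p_\omega^{(\bII_1^v(i),\theta)}(\bx_{\sigma(i)})$, matching the form in (\ref{fuse-1}).

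To finish I would compute the normalizer explicitly: set-integrating $\widetilde{\pi}_\omega$ with (\ref{set integral}) and using that each $p_\omega^{(\ell,\theta)}$ integrates to one, the $n!$ from $\sum_\sigma$ cancels the $1/n!$ factor, and summing over $n$ collapses the constraint $I_1\in\mathcal{F}_n(\mathbb{L}_1)$ to $I_1\in\mathcal{F}(\mathbb{L}_1)$, which delivers precisely the constant $C$ in (\ref{K-1}); dividing by $C$ then yields $w_\omega^{(I_1,\theta)}$. The hard part is the reparametrization itself: I will need to verify carefully that the correspondence $(\bII_{h,1},\bII_{h,2})\leftrightarrow(\sigma,I_1,\theta)$ is a bijection for each fixed cardinality $n$, and that the $Q_s$ and $Z_\omega$ factors are genuinely invariant under this relabeling, so that $\sigma$ decouples cleanly into the density product and leaves the weight $\sigma$-free; once this is established, matching the regrouped expression term-by-term against (\ref{fuse-1})--(\ref{K-1}) is immediate.
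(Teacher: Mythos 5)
Your proposal is correct and follows essentially the same route as the paper's proof: substitute the well-separated approximation of $\pi_s(\bX)^{\omega_s}$ into the G-CI numerator, pair the two sensors' index sequences position-by-position to define the fusion map $\theta$, absorb the $Q_s$ and $Z_\omega^{(\ell,\theta)}$ factors into $\widetilde{w}_\omega^{(I_1,\theta)}$, and obtain $C$ by set integration (with the $n!$ from $\sum_\sigma$ cancelling the $1/n!$). The only cosmetic difference is that you start from the ordered-index form (\ref{multi-Bernoulli_V3}) and make the bijection $(\bII_{h,1},\bII_{h,2})\leftrightarrow(\sigma,I_1,\theta)$ explicit, whereas the paper works from the $(\sigma,I)$ form (\ref{multi-Bernoulli_V2}) and performs the same reindexing implicitly in (\ref{eq:1}).
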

\begin{proof}
 Firstly, applying (\ref{fuse-2}) to the MB distribution of the form (\ref{multi-Bernoulli_V2}), we can obtain
 \begin{equation}\label{approx-1-v2}
 \pi_s(\bX)^\omega\approx\sum_{\sigma_s}\sum_{I_s\in \mathcal{F}_n(\mathbb{L}_s)}{\left(Q^{I_s}\right)}^\omega{\left(\prod_{i=1}^{n}p_s^{(\bII_s^v(i))}(\bx_{i})\right)}^\omega.
 \end{equation}

 \begin{figure*}[ht]
\begin{equation}\label{eq:1}
\begin{split}
&\,\,\,\,\,\,\widetilde{\pi}_\omega(\left\{\mathbf{x}_1,\ldots,\mathbf{x}_n\right\})\\&
=\!\sum_{\sigma_1}\sum_{I_1\in \mathcal{F}_n(\mathbb{L}_1)}\sum_{\sigma_2}\sum_{I_2\in \mathcal{F}_n(\mathbb{L}_2)}\left(Q_1^{I_1}\prod_{i=1}^{n}p_{1}^{(\bII^v_1(i))}(\bx_{\sigma_1(i)})\right)^{\omega_1}\left(Q_2^{I_2}\prod_{i=1}^{n}p_{2}^{(\bII^v_2(i))}(\bx_{\sigma_2(i)})\right)^{\omega_2}\\&
=\!\sum_{\sigma_1}\sum_{I_1\in \mathcal{F}_n(\mathbb{L}_1)}\sum_{\sigma_2}\sum_{I_2\in \mathcal{F}_n(\mathbb{L}_2)}\left(Q_1^{I_1}\right)^{\omega_1}\left(Q_2^{I_2}\right)^{\omega_2}
\prod_{i=1}^n \left(p_1^{(\bII^v_1(i))}(\mathbf{x}_{\sigma_1(i)})\right)^{\omega_1}\left(p_2^{(\bII^v_2(i))}(\mathbf{x}_{\sigma_2(i)})\right)^{\omega_2}\\&
=\sum_{\sigma}\sum_{I_1\in \mathcal{F}_n(\mathbb{L}_1)}\sum_{\theta\in \Theta_{I_1}}\left(Q_1^{I_1}\right)^{\omega_1}\left(Q_2^{I_2}\right)^{\omega_2}
\prod_{i=1}^n \int \left(p_1^{(\bII^v_1(i))}(\mathbf{x}_{\sigma(i)})\right)^{\omega_1}\left(p_2^{(\theta(\bII^v_1(i)))}(\mathbf{x}_{\sigma(i)})\right)^{\omega_2} d\bx_{\sigma{(i)}} \\&
\,\,\,\,\,\,\,\,\,\,\,\,\,\,\,\,\,\,\,\,\,\,\,\,\,\,\,\,\,\,\,\,\,\,\,\,\,\,\,\,\,\,\,\,\,\,\,\,\,\,\,\,\,\,\,\,\,\,\,\,\,\,\,\,\,
 \times \prod_{i=1}^n\frac{\left(p_1^{(I^v_1(i))}(\mathbf{x}_{\sigma_1(i)})\right)^{\omega_1}\left(p_2^{(\theta(I^v_1(i)))}(\mathbf{x}_{\sigma_1(i)})\right)^{\omega_2}}{\int{\left(p_1^{(\bII^v_1(i))}(\mathbf{x}_{\sigma_1(i)})\right)^{\omega_1}\left(p_2^{(\theta(\bII^v_1(i)))}(\mathbf{x}_{\sigma_1(i)})\right)^{\omega_2} d\bx_{\sigma(i)}}}\\&
=\sum_{\sigma}\sum_{I_1\in \mathcal{F}_n(\mathbb{L}_1)}\sum_{\theta\in \Theta_{I_1}}\widetilde{w}_\omega^{(I_1,\theta)}\prod_{i=1}^{n}p_{\omega}^{(\bII^v(i),\theta)}(\mathbf{x}_{\sigma(i)})
\end{split}
\end{equation}
\hrulefill
 \end{figure*}
By substituting (\ref{approx-1-v2}) into (\ref{G-CI}), and utilizing Definition 1, the numerator of (\ref{G-CI}) can be rewritten as  (\ref{eq:1}), where $\widetilde{w}_\omega^{(I_1,\theta)}$ and $p_{\omega}^{(\ell,\theta)}(\bx)$ are shown in (\ref{fuse-w_p}) and (\ref{fuse-p_p}), respectively.

Thus, the denominator C of (\ref{G-CI}) can be computed as:
\vspace{-0.05in}
\begin{align}\label{K-1}
\begin{split}
C=&\int \widetilde{\pi}_\omega(\left\{\mathbf{x}_1,\ldots,\mathbf{x}_n\right\}) \delta\bX\\
=&\sum_{n=0}^{\infty}\sum_{I_1\in \mathcal{F}_n(\mathbb{L}_1)}\sum_{\theta\in\Theta_{I_1}}\widetilde{w}_\omega^{(I_1,\theta)}\\
=&\sum_{I_1\in \mathcal{F}(\mathbb{L}_1)}\sum_{\theta\in\Theta_{I_1}}\widetilde{w}_\omega^{(I_1,\theta)}.
\end{split}
\end{align}
Finally, by substituting (\ref{eq:1}) and (\ref{K-1}) into (\ref{G-CI}), we obtain the fused density as the form of (\ref{fuse-1}), which is a GMB  distribution, the unlabeled version of GLMB distribution \cite{LMB_Vo,LMB_Vo2,GCI-GMB}.
\end{proof}
\subsection{MB Approximation}
In Section III-A, we have approximated the fused distribution as a GMB distribution. In practical scenarios, the G-CI fusion in a sensor network is usually realized by sequentially applying the G-CI fusion rule \cite{Battistelli},
since a sensor network always has more than two sensors. In addition, in order to enhance the performance of a sensor network further, the feedback work mode is sometimes enabled. Thus, the fused posterior needs to be in the same form of the local posteriors, and it is necessary to  approximate the GMB formed fused posterior as an MB distribution. Motivated by \cite{MeMBer_Vo2} and \cite{delta_GLMB}, in which the multi-object distribution is approximated by exact moment matching, we  further seek an MB approximation that matches the first-order moment of the GMB  formed fused posterior in (\ref{fuse-1}).
\begin{Pro}\label{P3}
Suppose the fused posterior has been approximated as a GMB of form  (\ref{fuse-1}). 
The MB distribution that matches exactly the first-order moment of the fused posterior $\pi_{\omega}(\bX)$ is $\pi_{\textit{MB}}(\bX)=\{(r_\omega^{(\ell)},p_\omega^{(\ell)})\}_{\ell\in\mathbb{L}_1}$, where
\begin{align}
\label{r_l}&r_\omega^{(\ell)}=\sum_{I_1\in \mathcal{F}(\mathbb{L}_1)}\sum_{\theta\in\Theta_{I_1}}1_{I_1}(\ell)w_\omega^{(I_1,\theta)}\\
\label{p_l}&p_\omega^{(\ell)}(\mathbf{x})=\sum_{I_1\in \mathcal{F}(\mathbb{L}_1)}\sum_{\theta\in\Theta_{I_1}}1_{I_1}(\ell)w_\omega^{(I_1,\theta)}p_{\omega}^{(\ell,\theta)}(\mathbf{x})\bigg{/}r_\omega^{(\ell)}.
\end{align}
\end{Pro}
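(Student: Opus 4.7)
The plan is to match the first-order moment (PHD) of the two distributions. Recall that the PHD of any multi-object density $\pi$ on $\mathbb{X}$ is $v(\mathbf{x}) = \int \pi(\{\mathbf{x}\}\cup\bX)\,\delta\bX$, and for an MB distribution $\{(r^{(\ell)},p^{(\ell)})\}_{\ell\in\mathbb{L}}$ it is known to reduce to $v_{\textit{MB}}(\mathbf{x}) = \sum_{\ell\in\mathbb{L}} r^{(\ell)} p^{(\ell)}(\mathbf{x})$. So the target expressions~(\ref{r_l})--(\ref{p_l}) should pop out by equating the PHDs and comparing coefficients per label $\ell\in\mathbb{L}_1$.

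First I would write the PHD of the GMB density in~(\ref{fuse-1}) directly from the definition, expanding the set integral as a sum of cardinality-$n$ vector integrals with the $1/n!$ factor. After substituting $\widetilde{\pi}_\omega(\{\mathbf{x},\mathbf{x}_1,\ldots,\mathbf{x}_n\})$ into this sum, the inner object is a permutation sum over $\sigma$ on $(n{+}1)$ elements of a product of single-object densities. The main combinatorial step is to integrate out $\mathbf{x}_1,\ldots,\mathbf{x}_n$: for each $\sigma$ there is a unique index $i^{\ast}$ with $\sigma(i^{\ast})=1$, so the integral leaves only the factor $p_{\omega}^{(\bI_1^v(i^\ast),\theta)}(\mathbf{x})$ while all other factors integrate to one by normalization. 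Counting how many $\sigma$'s give each $i^{\ast}$ yields $n!$ each, which cancels the $1/n!$ prefactor and collapses the permutation sum to $\sum_{i=1}^{n+1} p_{\omega}^{(\bI_1^v(i),\theta)}(\mathbf{x}) = \sum_{\ell\in I_1} p_{\omega}^{(\ell,\theta)}(\mathbf{x})$.

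After this simplification the PHD of the GMB becomes
\begin{equation*}
v_\omega(\mathbf{x}) = \sum_{I_1\in\mathcal{F}(\mathbb{L}_1)}\sum_{\theta\in\Theta_{I_1}} w_\omega^{(I_1,\theta)} \sum_{\ell\in\mathbb{L}_1} 1_{I_1}(\ell)\, p_{\omega}^{(\ell,\theta)}(\mathbf{x}),
\end{equation*}
where I swap the outer sum over cardinalities with the inner sums to replace $\mathcal{F}_{n+1}(\mathbb{L}_1)$ by $\mathcal{F}(\mathbb{L}_1)$, and use $\sum_{\ell\in I_1}=\sum_{\ell\in\mathbb{L}_1} 1_{I_1}(\ell)$. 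Exchanging the order of summation so that $\ell$ is outermost exposes the ``per-label'' decomposition $v_\omega(\mathbf{x}) = \sum_{\ell\in\mathbb{L}_1} g_\ell(\mathbf{x})$ with $g_\ell(\mathbf{x}) = \sum_{I_1,\theta} 1_{I_1}(\ell) w_\omega^{(I_1,\theta)} p_{\omega}^{(\ell,\theta)}(\mathbf{x})$.

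Finally, setting $v_\omega(\mathbf{x}) = v_{\textit{MB}}(\mathbf{x}) = \sum_{\ell\in\mathbb{L}_1} r_\omega^{(\ell)} p_\omega^{(\ell)}(\mathbf{x})$ label by label and integrating over $\mathbf{x}$ (using that each $p_{\omega}^{(\ell,\theta)}$ is a bona fide density so $\int p_{\omega}^{(\ell,\theta)}(\mathbf{x})d\mathbf{x}=1$) gives~(\ref{r_l}); dividing $g_\ell(\mathbf{x})$ by $r_\omega^{(\ell)}$ gives~(\ref{p_l}). The main obstacle I anticipate is bookkeeping the permutation-integral step cleanly—showing that the factor $n!$ introduced by the $\sigma$-sum exactly cancels the $1/n!$ from the set integral and that only the label-$\ell$ density survives—along with making sure the re-indexing from $\mathcal{F}_{n+1}(\mathbb{L}_1)$ to $\mathcal{F}(\mathbb{L}_1)$ aligns the $1_{I_1}(\ell)$ indicator correctly; once that is pinned down, the rest is algebraic matching.
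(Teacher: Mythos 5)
Your proposal is correct and follows essentially the same route as the paper's own proof: both compute the PHD of the GMB in~(\ref{fuse-1}) by expanding the set integral over cardinalities, collapsing the permutation sum against the factorial prefactor so that only the label-$\ell$ density survives, swapping the summation order to isolate each $\ell\in\mathbb{L}_1$ via $1_{I_1}(\ell)$, and then reading off $r_\omega^{(\ell)}$ and $p_\omega^{(\ell)}$ by matching against $\sum_{\ell}r_\omega^{(\ell)}p_\omega^{(\ell)}(\mathbf{x})$. The combinatorial bookkeeping you flag as the main obstacle is handled in the paper exactly as you describe.
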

\begin{proof}
According to Proposition \ref{P2}, a GMB distribution shown in  (\ref{fuse-1}) is used to approximate the fused posterior of G-CI fusion with two MB distributions, and its first order moment can be computed as

\begin{equation}\label{fused-PHD}
\begin{split}
    &\,\,\,\,\,\,v(\mathbf{x}_1)\\
    &=\sum_{n=1}^{\infty}\frac{1}{(n\!-\!1)!}\int\widehat{\pi}_{\omega}(\{\mathbf{x}_1,\mathbf{x}_2,\cdots,\mathbf{x}_n\})d\mathbf{x}_2,\cdots,d\mathbf{x}_n\\
    &=\sum_{n=1}^{\infty}\frac{1}{( n\!-\!1)!}\!\sum_{I_1\in\mathcal{F}_n(\mathbb{L}_1)}\!\sum_{\theta\in\Theta_{I_1}}\!\sum_{\sigma}w_\omega^{(I_1,\theta)}p_\omega^{(\bII^v(\sigma^{\!-1}(1)),\theta)}(\bx_1)\\
    &=\sum_{n=1}^{\infty}\sum_{(I_1,\theta)\in\mathcal{F}_n(\mathbb{L}_1)\times\Theta_{I_1}}\sum_{\ell\in I_1}w_\omega^{(I_1,\theta)}p_\omega^{(\ell,\theta)}(\bx_1)\\
    &=\sum_{\ell\in\mathbb{L}_1}\sum_{I_1\in \mathcal{F}(\mathbb{L}_1)}\sum_{\theta\in\Theta_{I_1}}1_{I_1}(\ell)w_\omega^{(I_1,\theta)} p_{\omega}^{(\ell,\theta)}(\mathbf{x}_1)\\
 &=\sum_{\ell\in\mathbb{L}_1} r_\omega^{(\ell)} p_\omega^{(\ell)}(\mathbf{x}_1).
\end{split}
\end{equation}
Equation~(\ref{fused-PHD}) proves that the MB distribution with parameters $\{(r_\omega^{(\ell)}, p_\omega^{(\ell)})\}_{\ell\in\mathbb{L}_1}$ shown in (\ref{r_l}) and (\ref{p_l}) matches exactly the first-order statistical moment of the GMB distribution produced by~(\ref{fuse-1}).
\end{proof}
\begin{Rem}
\normalfont{Fusion (\ref{G-CI}) with MB densities can be easily extended to $N_s\geqslant2$ sensors by sequentially applying the pairwise fusion
(\ref{r_l}) and (\ref{p_l}) $N_s-1$ times, where the ordering of pairwise fusions is irrelevant. Similar approach has been widely used in distributed fusion, such as GCI fusion with CPHD filters \cite{Battistelli} and LMB filters \cite{Fantacci-BT}.
}
\end{Rem}
\begin{Rem}
\normalfont{To implement the GCI fusion with MB densities algorithm, we need to firstly compute the $Z_\omega^{(\ell,\theta)}$ and $p_\omega^{(\ell,
\theta)}$ under each hypothesis according to (\ref{Z_w}) and (\ref{fuse-p_p}), then compute the $r_\omega^{(\ell)}$ and $p_\omega^{(\ell)}(\bx)$ according to (\ref{r_l}) and (\ref{p_l}).
However, it can be seen from (\ref{r_l}) and (\ref{p_l}) that the number of hypotheses grows exponentially with the number of targets. In order to reduce the computational burden, we can perform truncation of the GMB density  using the ranked assignment strategy \cite{LMB_Vo,LMB_Vo2} or parallel filtering by grouping targets \cite{delta_GLMB}.}
\end{Rem}

\begin{figure}[ht]
\centering
\includegraphics[width=9.cm]{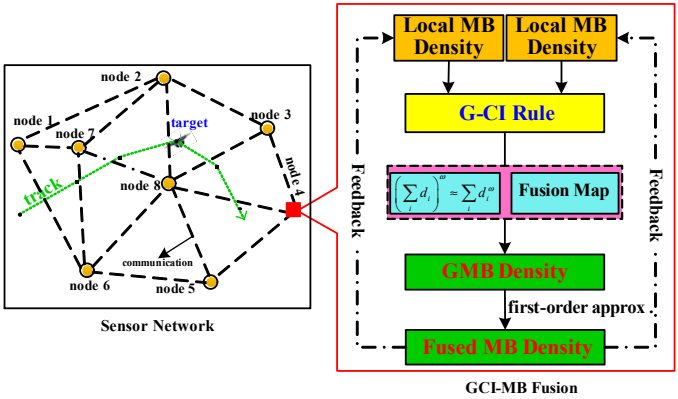} 
\caption{\small{ A sensor network is shown in the left. Each node monitors targets and exchange posterior with its neighbours. The proposed GCI-MB fusion algorihtm is employed to complete the distributed fusion task over the sensor network, and the process diagram of GCI-MB fusion algorithm is shown in the right.
}}
\label{fig_sim}
\end{figure}

\subsection{Summary}

In this section, we proposed a distributed multi-sensor multi-object tracking algorithm based on G-CI fusion rule and MB distribution, henceforward referred to as GCI-MB. By employing two reasonable approximations, the fused posterior density of two MB densities after GCI-MB is also an MB distribution. Therefore, by sequentially applying the closed form solution of GCI-MB, we can complete  another fusion process between the previously fused results and a third sensor node. In addition, the MB formed density after GCI-MB also facilitates the feedback process to further improve the fusion performance.

The process diagram of GCI-MB fusion is shown in right part of Fig. 3 and the proposed method is employed to complete the distributed fusion for a sensor network as shown in left part of Fig. 3. The complete GCI-MB fusion scheme for a sensor network includes the following steps:
 \begin{enumerate}
 \item Local filtering: each local sensor node runs MB filtering;
 \item Information exchange: each node exchanges its posteriors with their neighbors;
 \item Posterior fusing: each node performs GCI-MB fusion by sequentially applying the fused posterior.
 \item Feedback: to further improve the performance of the GCI-MB fusion, the fused MB density is fed back to each local node.
 \end{enumerate}

 Take node 4 as an example. In the first step, the local measurements are used to update a local MB posterior. The node then exchanges its posterior with nodes 3, 4, 5 and 8 and collects their posteriors. In the next step, it sequentially performs GCI-MB fusion three times, and finally at feedback stage, the fused posterior is fed back to local sensor nodes 3, 4, 5 and 8 to further improve the fusion performance.

\section{Implementation of GCI-MB Fusion Algorithm}
The conventional SMC implementation of MB  filter is used to compute the fusion of local information. To fuse information from different sensor nodes, we must be able to compute (\ref{r_l}) and (\ref{p_l}). However, this cannot be carried out directly because
each node has its own particle filter with its own support. Therefore, we use a a kernel density estimation (KDE)~\cite{Uney-2} method to create continuous approximations of the local posteriors. These posteriors are then sampled from to compute the G-CI fusion using different particle supports.

For the detail of SMC implementation of MB filter, the reader is referred to \cite{MeMBer_Vo2,MeMber_Vo3}. We present the SMC implementation of GCI-MB fusion directly.

\subsection{SMC Implementation of GCI-MB}
Let us denote the particle representation of each node's local MB distribution by $\left\{r_s^{(\ell)},\{\zeta_{s, m_s}^{(\ell)}, \bx_{s,m_s}^{(\ell)}\}_{m_s=1:L_s^{(\ell)}}\right\}_{\ell \in \mathbb{L }_s}$ with
\begin{align}\label{s_x}
\begin{split}
p_s^{(\ell)}(\bx)=\sum_{m_s=1}^{L_s^{(\ell)}}\zeta_{s,m_s}^{(\ell)}\delta_{\bx_{s,m_s}^{(\ell)}}(\bx), \,s=1,\,2
\end{split}
\end{align}
where
$\zeta_{s,m_s}^{(\ell)}$ is the weight associated with the $m_s$-th particle $\bx_{s,m_s}^{(\ell)}$ which is a point generated from the $\ell$-th density, and the $L_s^{(\ell)}$ denotes the number of particles representing the $\ell$-th density.

In Section III, we derived the closed-form expression of the fused posterior as an MB distribution with its MB parameters shown in Proposition~\ref{P3}. The implementation of
GCI-MB is equivalent to calculate the MB parameters of the fused posterior, including the existing probability  $r^{(\ell)}$ and its  density $p^{(\ell)}(\bx)$ conditional on existence, $\ell \in \mathbb{L}_1$. During the computing process, the parameters $p_{\omega}^{(\ell,\theta)}(\bx)$ and $Z_{\omega}^{(\ell,\theta)}$ in (\ref{fuse-1}), (\ref{r_l}) and (\ref{p_l})  are the key factors.

As it was mentioned earlier, two  posteriors presented by particles from two nodes cannot directly be fused via GCI-MB, for each node has its own set of particles. Neither the support nor the number of particles are guaranteed to be the same. Thus, we employ KDE, in which the estimated density is a sum of kernel function shifted to particle points. We associate each $p_s^{(\ell)}(\bx), \ell\in\mathbb{L}_s$ with the parameter $\mathbf{\Sigma}_s^{(\ell)}$ and use the density given by
\begin{align}\label{p_x_estimation}
\begin{split}
\widehat{p}_s^{(\ell)}(\bx)=\frac{1}{L_s^{(\ell)}}\sum_{m_s=1}^{L_s^{(\ell)}}\mathcal{N}\left(\bx;\bx_{s,m_s}^{(\ell)},\mathbf{\Sigma}_{s}^{(\ell)}\right),\ell\in \mathbb{L}_s
\end{split}
\end{align}
where $\mathcal{N}\left(\bx;\bx_{s, m_s}^{(\ell)},\mathbf{\Sigma}_{s}^{(\ell)}\right)$ is a Gaussian distribution with mean $\bx_{s,m_s}^{(\ell)}$ and covariance $\mathbf{\Sigma}_{s}^{(\ell)}$.

Next, we describe the computation of $\mathbf{\Sigma}_{s}^{(\ell)}$ for $p_s^{(\ell)}(x),\ell\in \mathbb{L}_s$:

In order to find the kernel parameters $\mathbf{\Sigma}_{s}^{(\ell)}$ for the members of the Bernoulli component $\ell$, we first find a transform that diagonalizes the empirical covariance of these points in the transformed domain. Then, the problem of finding the kernel parameters in multiple-dimensions reduces to independent single dimensional problems.

The transform is given by the inverse square root of the empirical covariance matrix $\Upsilon_\ell$ of Bernoulli component $\ell$. We transform all $\bx_{s,m^{\prime}}^{(\ell)}\in\{\bx_{s,m^{\prime}}^{(\ell)}|m^{\prime}=1,\ldots,L_s^{(\ell)}\}$ using
\begin{align}\label{transform-1}
\begin{split}
\mathbf{y}_{s,m^{\prime}}^{(\ell)}=\mathbf{W}_{\ell}\bx_{s,m^{\prime}}^{(\ell)}
\end{split}
\end{align}
\begin{align}\label{transform-2}
\begin{split}
\mathbf{W}_{\ell}=\Upsilon_{\ell}^{-1/2}
\end{split}
\end{align}
Given that the covariance of $\mathbf{y}_{s,m^{\prime}}^{(\ell)}$ is diagonal, the d$_{\emph{state}}$-dimensional Gaussian kernel in the transformed domain simplifies to
\begin{align}\label{transform-simple}
\begin{split}
K\left(\mathbf{y},\mathbf{y}_{s,m^{\prime}}^{(\ell)}\right)=\prod_{{d'}=1}^{d_{\emph{state}}}\frac{1}{\sqrt{2\pi}h_{d'}}\exp\left(-\frac{1}{2}\frac{(\mathbf{y}^{d'}-\mathbf{y}_{s,m^{\prime}}^{(\ell),{d'}})^{2}}{h_{d'}^{2}}\right).
\end{split}
\end{align}
where d$_{\emph{state}}$ is the dimensionality of the state space and $h_{d'}$s are the bandwidth (BW) parameters of the 1-D Gaussian kernels.

The BW $h_{d'}$ for each dimension can be found using one of the well established methods in the literature [35]. In particular, we use the following rule-of-thumb (RUT) \cite{RCT}:
\begin{align}\label{BW}
\begin{split}
h_{d'}=\sigma_{d'}\left(\frac{4}{3N}\right)^{1/5}
\end{split}
\end{align}
where $\sigma_{d'}$ is the empirical standard deviation of $y^d_{j}$s and $N$ is the number of these points. The reason for this choice is its simplicity and low computational complexity compared to other methods such as \textit{least squares cross-validation} \cite{Jones}.

The covariance matrix that specifies the kernels in (39) for the members of the Bernoulli component $\ell$ is given by
\begin{align}\label{parameters}
\begin{split}
&\mathbf{\Sigma}_s^{(\ell)}=\mathbf{T}_{\ell}\mathbf{\Lambda}_{\ell}\mathbf{T}_{\ell}^{T}\\
&\mathbf{T}_{\ell}=\mathbf{W}_{\ell}^{-1}\\
&\mathbf{\Lambda}_{\ell}=\mbox{diag}(h_1^2,h_2^2,\ldots,h_{d_{\emph{state}}}^2).
\end{split}
\end{align}
\subsubsection{Estimation of  the Parameter $p_w^{(\ell,\theta)}(\bx)$ }
The union of the input particle sets, i.e.,
\begin{align}\label{uinon of particles_implementation}
\begin{split}
P_U&\triangleq\{\bx_{1,m_1}^{(\ell)}\}_{m_1=1:L_1^{(\ell)}}\bigcup \{\bx_{2,m_2}^{(\theta(\ell))}\}_{m_2=1:L_2^{(\theta(\ell))}}
\end{split}
\end{align}
can be seen as $L_U=L_1^{(\ell)}+L_2^{(\theta(\ell))}$ samples drawn from the mixture important samping (IS) density
\begin{align}\label{p_IS}
\begin{split}
&p_{\emph{IS}}(\bx)=\frac{L_1^{(\ell)} p_1^{(\ell)}(\bx)^{\omega_1}+L_2^{(\theta(\ell))} p_2^{(\theta(\ell))}(\bx)^{\omega_2}}{L_1^{(\ell)} +L_2^{(\theta(\ell))}}
\end{split}
\end{align}

Therefore, $P_U$ given by (\ref{uinon of particles_implementation}) is a convenient particle set to represent $p_{\omega}^{(\ell,\theta)}(\bx)$ and the IS weights for $\bx_{m^{\prime}}\in P_U$ are given by
\begin{align}\label{weight_implementation}
\begin{split}
&\zeta_{m^{\prime}}\propto\frac{p_1^{(\ell)}(\bx_{m^\prime})^{\omega_1}p_2^{(\theta(\ell))}(\bx_{m^\prime})^{\omega_2}}
{L_1^{(\ell)}p_1^{(\ell)}(\bx_{m^\prime})^{\omega_1}+L_2^{(\theta(\ell))}p_2^{(\theta(\ell))}(\bx_{m^\prime})^{\omega_2}}.
\end{split}
\end{align}

In order to compute the IS weights in (\ref{weight_implementation}), evaluations of both $p_1^{(\ell)}(\bx_{m^\prime})$ and $ p_2^{(\theta(\ell))}(\bx_{m^\prime})$ at all points of
$P_U$ are necessary.
After obtaining the KDEs of $p_1^{(\ell)}(\bx_{m^\prime})$ and $ p_2^{(\theta(\ell))}(\bx_{m^\prime})$ using (\ref{p_x_estimation}) respectively, feasible estimates of $\widehat{\zeta}_{m^{\prime}}$s
 are computed by substituting these evaluations into (\ref{weight_implementation}):
\begin{align}\label{weight_estimation_implementation}
\begin{split}
&\widehat{\zeta}_{m^{\prime}}\propto\frac{\widehat{p}_1^{(\ell)}(\bx_{m^\prime})^{\omega_1}\widehat{p}_2^{(\theta(\ell))}(\bx_{m^\prime})^{\omega_2}}
{L_1^{(\ell)}\widehat{p}_1^{(\ell)}(\bx_{m^\prime})^{\omega_1}+L_2^{(\theta(\ell))}\widehat{p}_2^{(\theta(\ell))}(\bx_{m^\prime})^{\omega_2}}.
\end{split}
\end{align}
After resampling $\{\widehat{\zeta}_{m^{\prime}},\bx_{m^{\prime}}\}_{m^{\prime}=1:L_{U}}$, we obtain equally weighted samples to represent $p_\omega^{(\ell,\theta)}(\bx)$.
\subsubsection{Estimation of $Z_\omega^{(\ell,\theta)}$ }
Using the proposal density $p_{\emph{IS}}(\bx)$ given in (\ref{p_IS}), the IS estimate of  $Z_\omega^{(\ell,\theta)}$ is given by
\begin{align}\label{Z_estimation}
\begin{split}
&Z_\omega^{(\ell,\theta)}\triangleq \\
&\sum_{\bx_{m^\prime}\in P_U} \frac{p_1^{(\ell)}(\bx_{m^\prime})^{\omega_1}p_2^{(\theta(\ell))}(\bx_{m^\prime})^{\omega_2}}{L_1^{(\ell)} p_1^{(\ell)}(\bx_{m^\prime})^{\omega_1}+L_2^{(\theta(\ell))} p_2^{(\theta(\ell))}(\bx_{m^\prime})^{\omega_2}}
\end{split}
\end{align}
where $P_U$ is the union of the input particle  sets (\ref{uinon of particles_implementation}).

We substitute the KDEs of $\widehat{p}_1^{(\ell)}(\bx_{m'})$ and $\widehat{p}_2^{(\theta(\ell))}(\bx_{m'})$ into (\ref{Z_estimation}) to achieve computational feasibility and obtain
\begin{align}\label{Z_estimation_implementation}
\begin{split}
&\widehat{Z}_\omega^{(\ell,\theta)}\triangleq \\
&\sum_{\bx_{m^\prime}\in P_U} \frac{\widehat{p}_1^{(\ell)}(\bx_{m^\prime})^{\omega_1}\widehat{p}_2^{(\theta(\ell_1))}(\bx_{m^\prime})^{\omega_2}}{L_1^{(\ell)} \widehat{p}_1^{(\ell)}(\bx_{m^\prime})^{\omega_1}+L_2^{(\theta(\ell))} \widehat{p}_2^{(\theta(\ell))}(\bx_{m^\prime})^{\omega_2}}.
\end{split}
\end{align}

\subsection{ Pseudo-code }
A brief summary of the SMC implementation of GCI-MB is presented in the following algorithm. The first inputs of the algorithm are the particle sets (\ref{s_x}) of  local MB posteriors from both sensors.
\begin{enumerate}[Step 1.]
\item Under each $I_1\in\mathcal{F}(\mathbb{L}_1)$, create the map space $\Theta(I_1)$.

\item  Under each $(I_1,\theta)\in\mathcal{F}(\mathbb{L}_1)\times\Theta(I_1)$,  for each $\ell \in I_1, \theta(\ell)\in \theta(I_1)$:
                                                               \begin{enumerate}[$\bullet$]
                                                               \item Compute the KDE parameters of $\widehat p_1^{(\ell)}(\bx)$ and $\widehat p_2^{(\theta(\ell))}(\bx)$ in (\ref{p_x_estimation}), respectively;
                                                                \item According to (\ref{uinon of particles_implementation}), construct the sample set $P_U$ drawn from the IS density (\ref{p_IS});
                                                                \item Evaluate KDEs of  the input local densities at each particles in this set according to (\ref{p_x_estimation}).
                                                                \item Evaluate the IS weights for this sample set  according to (\ref{weight_implementation});

                                                         \item  Resample this sample set and  obtain the normalized weight of $p_\omega^{(\ell, \theta)}(\bx)$.
                                                               \item Evaluate the quantity $Z_\omega^{(\ell,\theta)}$ according to  (\ref{Z_estimation_implementation});

                                                               \end{enumerate}
\item Calculate the weight $w_\omega^{(I_1,\theta)}$ for each $(I_1,\theta)\in\mathcal{F}(\mathbb{L}_1)\times\Theta_{I_1}$ according to (\ref{fuse-w}), (\ref{fuse-w_p}) and (\ref{K-1}).
\item Calculate each fused MB parameter  $r^{(\ell)}$ and its density $p^{(\ell)}(\bx)$, $\ell\in\mathbb{L}_\omega$ according to (\ref{p_l}) and (\ref{r_l}).
\end{enumerate}
The output of the algorithm is a set of particles representing the fused posteriors with parameters $r^{(\ell)}$ and $p^{(\ell)}(\bx)$.

\section{Performance Assessment}
In this section, the performance of the proposed GCI-MB fusion algorithm is examined in two tracking scenarios
in terms of the optimal sub-pattern assignment (OSPA) error \cite{MeMBer_Vo1}. GCI-MB  is implemented   using the SMC approach proposed in Section IV. Since this paper does not focus on the problem of weight selection, we choose the Metropolis weights \cite{L_Xiao} in GCI-MB fusion for convenience (we note that this may have an impact on the fusion performance).

The MB filter for image data, also referred to as MB track-before-detect (MB-TBD) filter \cite{MeMber_Vo3} is used to estimate local sensors' posteriors. Local filters adopt the ``standard'' target motion model \cite{MeMBer_Mahler} without target births. Each single target with a four-dimensional state vector containing the two-dimensional positions and velocities is initialized within region around the correct target positions, and follows a constant velocity  model. The probability of survival is $p_{e}^k=0.95$. An  image observation model is used similarly as \cite{MeMber_Vo3}. The surveillance region is divided into $D$ resolution cells denoted as $V_{1},V_{2},\cdots,V_{D}\subset\mathbb{R}^{\nu/2}$. At time $k$, we represent the observations at time $k$ as $\bz^k=(z^k_{1},z^k_{2},\cdots,z^k_{D})^{\prime}\in \mathbb{R}^{D}$, with $z^k_{j}$ the  observation data obtained from the $j$th cell. A target with state $\mathbf{x}$ illuminates a set of pixels denoted by $U(\mathbf{x})$. Targets are assumed to be rigid bodies, which means that the regions affected by different targets do not overlap, i.e., $\mathbf{x}\neq \mathbf{x}^{\ast}\Rightarrow U(\mathbf{x})\cap U(\mathbf{x}^{\ast})=\emptyset$. Assuming that the values of different pixels are independently distributed conditioned on the multi-object state $\mathbf{X}^k$,  the multi-object likelihood function $g(\bZ^k|\bX^k)$ of $\bZ^k=\{\bz^k\}$ is given by:
\begin{equation}\label{multi-target likelihood}
  g(\mathbf{Z}^k|\mathbf{X}^k)=f(\bz^k)\prod_{\mathbf{x}\in \mathbf{X}^k }g_{z}(\mathbf{x})
  \end{equation}
where
\begin{align}
\notag g_{z}(\mathbf{x})&=\prod_{j\in U(\mathbf{x})}\frac{P_{H_1}(z^k_{j}|\mathbf{x})}{P_{H_0}(z^k_{j})}\\
\notag f(\bz^k)&=\prod_{j=1}^D P_{H_0}(z_{j}^k)
\end{align}
with $P_{H_1}(z^k_{j}|\mathbf{x})$ the observation density function for the $j$th cell occupied by the target state
 $\mathbf{x}$ and
 $P_{H_0}(z^k_{j})$ the noise density for the $j$th cell. For different applications,  $P_{H_1}(z_j^k|\bx)$ and $P_{H_0}(z_j^k|\bx)$ have different distributions, such as Gaussian distribution \cite{MeMber_Vo3}, Rayleigh distribution and   Compound-Gaussian distribution \cite{no-Gaussian-clutter}, etc.

In the following experiments, we consider a two-dimensional scenario over $50\times50$ resolution cells with cell lengths $\delta_x=\delta_y=1$ m.
The interval between the sensor observations is $T=1$ s.
The probability densities  of the intensity $z^k_{j}$ of pixel $j$, at  time $k$, adopt Gaussian distribution, namely,
\[\displaystyle{
\begin{split}
P_{H_1}(z^k_{j}|\mathbf{x})&=\mathcal{N}\left(z^k_j;\sum_{\bx\in X^k}\sigma_j^T(\bx),\sigma^N\right)\\
P_{H_0}(z^k_{j})&=\mathcal{N}\left(z^k_j; 0,\sigma^N\right)
\end{split}
}\]
where $\sigma_j^T(\bx)$ is the power contribution from target state $\bx$ to the $j$th cell and $\sigma^N$ is the noise power. Here, $\sigma^T_j(\bx)$ is described by a point spread function \cite{MeMber_Vo3}, for example,
\[\displaystyle{
  \sigma^T_j(\bx)=\frac{\delta_x\delta_y\sigma^T}{2\pi\sigma^2_b}\exp\left(-\frac{(\delta_x a-p_x)^2+(\delta_y b-p_y)^2}{2\sigma^2_b}\right)
}\]
  where $\sigma^T$ is the source intensity, $\sigma^2_b$ is the blurring factor, $(p_x,p_y)$ is the position of the state $\bx$,  and $j=(a,b)$ denotes the position of the $j$th cell in two-dimensionality image of the surveillance region. The SNR is defined by $10\log(\sigma^T/\sigma^N)$.  Here, the source intensity $\sigma^T$  is assumed to be the same deterministic  value for all the sensors. In practical scenarios, the $\sigma^T$ is always the random value and  follows different distributions among different sensors \cite{non-iid}, however, it is not the scope of this paper.
\subsection{Scenario 1}
Scenario 1 involves two parallel targets with the same  velocity as shown in Fig. \ref{fig_two_targets_scenario}, thus the $d_E$ between target states consisting of target position and velocity is completely determined by the physical  distance. For this scenario we apply a point spread function with the blurring factor $\delta^2_b=1$. $U(\mathbf{x})$ is the $3\times 3$ pixels square region whose center is closest to $(p_x,p_y)$. The SMC trials use $200$ particles per hypothesized track.
\begin{figure}[htbp]
\centering
\includegraphics[width=8cm]{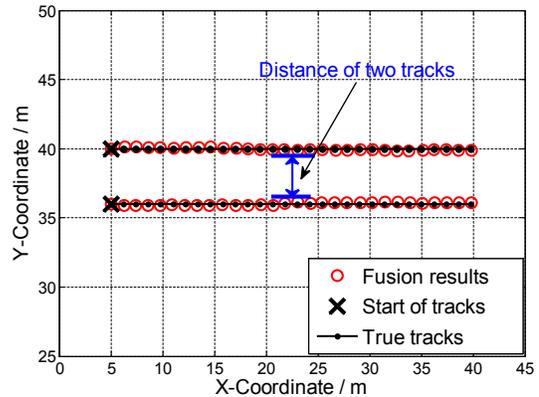} 
\caption{\small{ The scenario of distributed  sensor network with two sensors tracking two parallel targets.}}
\label{fig_two_targets_scenario}
\end{figure}

\subsubsection{Experiment 1}
Proposition \ref{P2}  applies the approximation in (\ref{fuse-2}) to obtain the GMB formed fused posterior. In Section III,  a Gaussian example had been provided to prove the reasonable of the approximation in (\ref{fuse-2}). In order to  back up that the approximation in (\ref{fuse-2}) is generalized enough to support the non-Gaussian case, the MB filter for image data is used to provide the multi-object estimations. Hence, we first examine  effectiveness of (\ref{fuse-2})  in terms of  the  absolute error between $\pi(\widehat{\bX})^{\omega}$ and $P(\widehat{\bX})$ at a given multi-target state estimation $\widehat{\bX}$, which is defined by
\begin{equation}\label{error}
\begin{split}
{E_{\{\pi^\omega,P\}}(\widehat{\bX})}=\big{|}{\pi(\widehat{\bX})}^{\omega}-P(\widehat{\bX})\big{|}
\end{split}
\end{equation}
where $\pi(\widehat{\bX})^{\omega}$ is in the form of (\ref{term_w}) and
\begin{equation}
\begin{split}
P(\widehat{\bX})= \sum_{\bII_h\in H(n)}\left({Q^{\bII_h}}\right)^{\omega}\left(\prod_{i=1}^{n}p^{(\bII_h(i))}(\widehat{\bx}_i)\right)^{\omega}
\end{split}
\end{equation}
The KDE method in  (\ref{p_x_estimation}) is adopted to  estimate the values of $\pi(\widehat{\bX})^{\omega}$ and $P(\widehat{\bX})$.

As mentioned in section III-A, the approximation in (\ref{fuse-2}) is mainly influenced by the target SNR and Euclidean distance between target states denoted by $d_E$ (this paper mainly focusses on the influence of $d_E$ and SNR).
Hence,  the approximation error of (\ref{fuse-2}) is evaluated by $E_{\{\pi^\omega,P\}}$ for different $d_E$ and different SNR values in this experiment. The physical distance $d_E$ varies within $2$--$6$ m and SNR varies from $6$ dB to $18$ dB.
\begin{figure}[htbp]
\centering
\includegraphics[width=8cm]{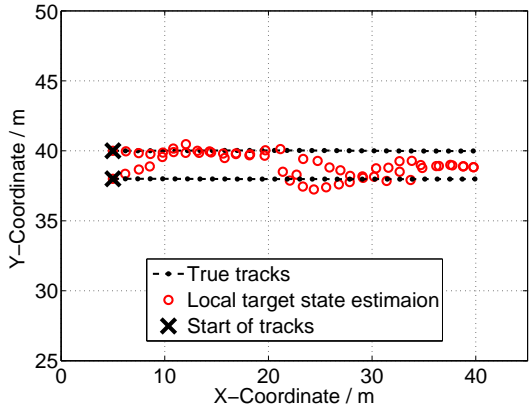} 
\caption{\small{ Target state estimation for two parallel targets with the $d_E$ between target states equals to 2m.}}
\label{fig_target_state_estimation}
\end{figure}
It is important to note that when targets are closely spaced, e.g. when $d_E=2~\text{m}, 3~\text{m}$, sometimes their state estimates may interfere with each other, making the distance between their estimations approach to $0$, as shown in Fig.~\ref{fig_target_state_estimation}. The detail analysis of this phenomenon is given in \cite{MeMber_Vo3}. Moreover, when this phenomenon arises,   the relationship between the approximation error
and the distance between target states cannot be reflected correctly. Hence, we compute a measure of  efficiency of estimation to evaluate  the validity of the average approximation error. When the OSPA of a multi-object state estimation is lower than a fixed value, we refer to this multi-object state as an efficient estimation. The proportion of efficient estimations is defined by
\begin{equation}
 \rho=\frac{N_{\emph{efficient}}}{N_{\emph{total}}}
 \end{equation}
where $N_{\emph{efficient}}$ is the number of efficient estimations, and $N_{total} $ is the total number of multi-object state estimations, which equals to the frame number times the number of Monte Carlo (MC) runs.
The approximate error is averaged over $N_{\emph{efficient}}$ estimations among 30 frames times 100 MC runs.

 \begin{figure}[htbp]
\begin{minipage}[htbp]{0.94\linewidth}
  \centering
  \centerline{\includegraphics[width=8cm]{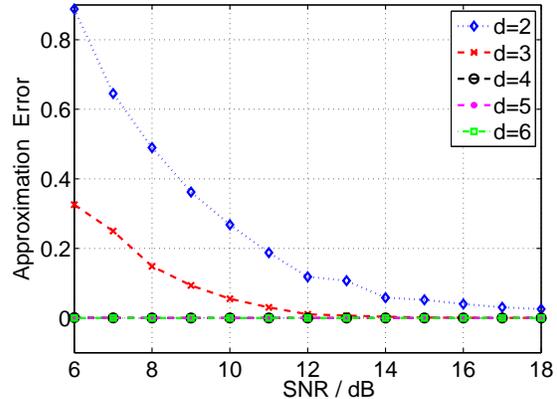}}
  \centerline{\small{\small{(a)}}}\medskip
\end{minipage}
\vfill
\begin{minipage}[htbp]{0.94\linewidth}
  \centering
  \centerline{\includegraphics[width=8cm]{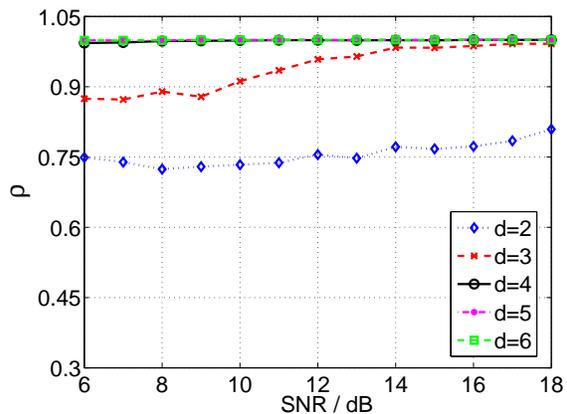}}
  \centerline{\small{\small{(b)}}}\medskip
\end{minipage}
\caption{(a)  The approximation error between (\ref{term_w}) and (\ref{fuse-2}) varies $d_E$ between true target states and  SNRs for efficient estimations, (b) the proportion of efficient estimations among 30 frames times 100 MC runs varies $d_E$ between true target states and  SNRs.}
\label{fig_approximation_error}
\end{figure}

 The approximation error under SMC implementation of MB filter is shown in Fig. \ref{fig_approximation_error} (a). The approximation error  is observed to be smaller with the bigger distances $d_E$ between the target states. More specifically, the approximation is generally acceptable when $d_E\geqslant3$ m, especially when $d_E\geqslant4$ m, the approximation error is very close to $0$ for all investigated SNRs ranging from $6$ dB to $18$ dB.  Thus the HPD regions are proved to be separated when $d_E\geqslant3$ m. The results also suggest that for a larger SNR, the approximation error is small even when the targets are in proximity , e.g. $d_E=3$ m. The larger SNR will lead to the smaller width of the HPD regions, and thus the smaller $d_E$ between target states could be tolerated.

 To further supplement the reasonableness of approximation (\ref{fuse-2}) for non-Gaussian cases, we provide the proportion of efficient estimations in Fig. \ref{fig_approximation_error} (b). It can be seen that the proportion of efficient estimations is close to $1$ (which means that the target state estimates are reliable) for large SNRs. Indeed, this occurs when $d_E\geqslant4$ m with SNR $\geqslant 6$ dB or when $d_E=3$ m with SNR $\geqslant 14$ dB. Overall, the results shown in Fig.~\ref{fig_approximation_error} (a) show that the approximation (\ref{fuse-2}) is acceptable when the estimations of target states are efficient and the approximation (\ref{fuse-2}) can be applied to perform  fusion for practical scenarios with the above conditions. These results conform that the approximation in (\ref{fuse-2}) is generalized enough to support the non-Gaussian case.

 As expected, the above discussions are in accordance with the analyses in Section III-A. Thus we come to conclusion that when the $d_E$ between target  states meets the targets separated condition ($d_E\geqslant3$ m for this simulation scenario), or the SNRs are large although they are nearly in proximity, the approximation (\ref{fuse-2}) is acceptable.  In addition, the approximation is more sensitive to the distance $d_E$ between target states than to the SNR.

\subsubsection{Experiment 2}
To prove the effectiveness  of  the GMB approximation in  (\ref{fuse-1}) described in Section III-A, we examine the sensor fusion performance for two sensors under different
$d_E$ between target states consisting of target position and velocity. SNR is fixed at 15 dB in order to reduce its influence on approximation (\ref{fuse-2}). The distance $d_E$ varies within $\{1, 2, 3, 4, 8, 12\}$ m. The SMC trials use $200$ particles per hypothesized track. The OSPA errors are averaged over 30 frames and 100 MC runs.
\begin{figure}[htbp]
\centering
\includegraphics[width=8cm]{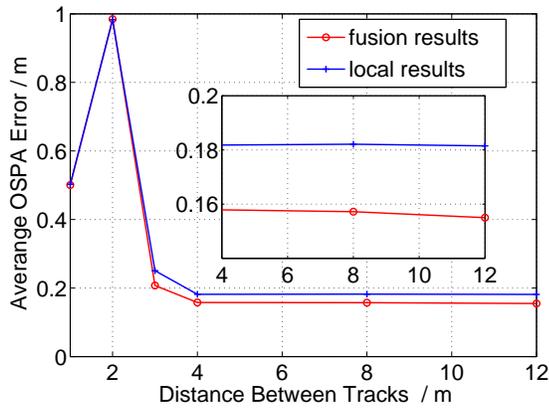} 
\caption{\small{ The average OSPA error with different distance of tracks (averaged over 100 MC runs).}}
\label{fig_OSPA_two_targets}
\end{figure}
Fig. \ref{fig_OSPA_two_targets} shows the average OSPA errors  for both the local filter and GCI-MB fusion algorithm versus the distance $d_E$.
It can be seen that the performance of GCI-MB fusion is better than local MB-TBD filter at each value of $d_E$. More specifically, the performance gains of fusion algorithm
are stable when $d_E\geq3$ m. When $d_E\leq2$ m, both algorithms perform poorly almost at the same level.  The reason is that when targets are in proximity thereby violating
the  rigid targets assumption, the performance of MB-TBD degrades heavily leading to the performance degradation of the fusion algorithm.  This is also the reason why we use the efficient estimations to evaluate the error of approximation (\ref{fuse-2}) in {Experiment 1}. Also note that  the fusion and tracking performances at $d_E=1$ m seem better than those at $d_E=2$ m  because  the state estimates are prone to be the middle of the two tracks when  the regions illuminated by different objects exhibit the superposition, and thus  the estimates at $d_E=1$ m are  nearer to the true target states than those at  $d_E=2$ m.  In summary, the above results verify that the GMB approximation is reasonable and effective when $d_E\geq3$ m.

\begin{Rem}
The  required minimum $d_E$ is equal to 3 times the cell resolution of the sensor network in the above experimental scenario, which is comparable to the regions affected by targets. Indeed, our experience with empirical data suggests that the distance between targets is mostly larger than 3 times the cell resolution in most practical scenarios.
\end{Rem}
\subsection{Scenario 2}
To assess the efficacy of the proposed GCI-MB fusion,
a sensor network scenario involving three targets  is considered as shown in Fig. \ref{fig_three_target_scenario}.  In this scenario, we apply a point spread function with the blurring factor $\delta^2_b=1$. $U(\mathbf{x})$ is the $5\times 5$ pixels square region whose center is closest to $(p_x,p_y)$.
\begin{figure}[htbp]
\centering
\includegraphics[width=9cm,height=6cm]{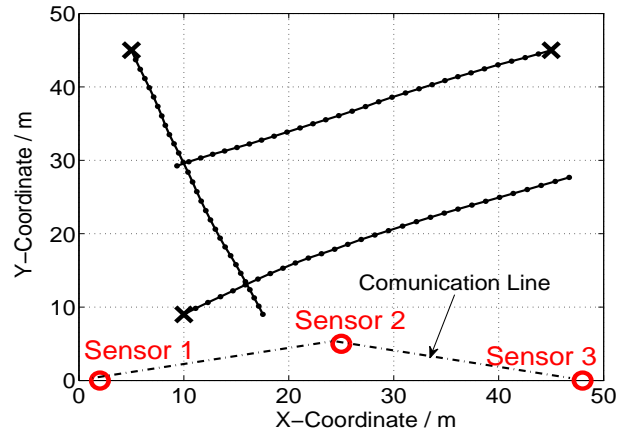} 
\caption{\small{ The scenario of distributed  sensor network with three sensors tracking three targets.}}
\label{fig_three_target_scenario}
\end{figure}
In this sensor network, each sensor has the same quality and can only exchange  posteriors with its neighbours. In particular, both sensors 1 and 3 perform GCI fusion with two posteriors from sensor 2 and the local filter, while sensor 2 performs GCI fusion with three posteriors from sensor 1, sensor 3 and the local filter by applying the pairwise fusion (\ref{r_l}) twice. There are two work modes in this sensor network given as follow:\\

 \begin{enumerate}[\textbf{M}1:]
 \item At time $k$ each sensor performs filtering locally, resulting in a local posterior denoted by $f_{\bM1}^l$. After receiving posteriors from its neighbours, it operates GCI fusion leading to fused posterior denoted by $f_{\bM1}^w$.
 \item  At time $k\!-\!1$, the fused posteriors are fed back to corresponding local filters. Then at time $k$, each sensor operates the local filter on the local distribution denoted by $f_{\bM2}^l$ and operate the GCI fusion on the fused one denoted by $f_{\bM2}^w$.
 \end{enumerate}

\subsubsection{Experiment 1}
In this experiment, the performance of the GCI-MB fusion is evaluated by comparison with that of the local MB-TBD filters in two work modes, and how the performance advantage gained from sensor fusion increases with more sensors is also provided. In the sensor network, each sensor choose the MB-TBD filter as the local filters.  The SMC trials use $200$ particles per hypothesized track.

Figs.~\ref{fig_OSPA_three_targets_1} and \ref{fig_OSPA_three_targets_2} show the OSPA errors of  both  the local filter and the GCI-MB fusion for sensor 1 and sensor 2 working in modes \textbf{ M}1 and \textbf{M}2. For sensor 3, results similar to sensor 1 are expected.

\begin{figure}[htbp]
\centering
\includegraphics[width=9cm]{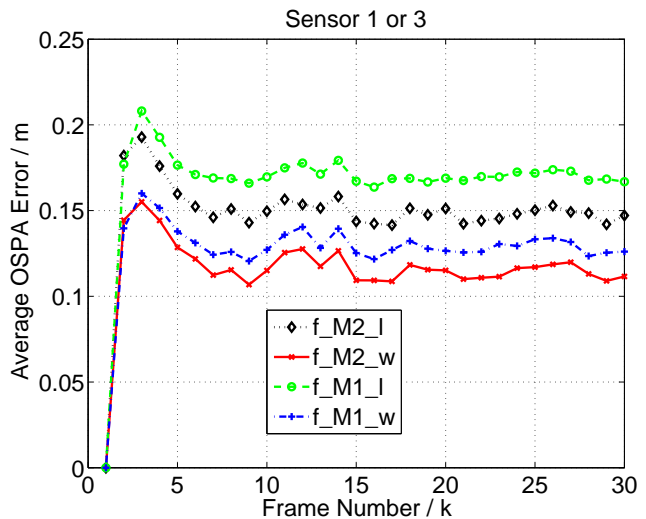} 
\caption{\small{ The average OSPA error of sensor 1 (averaged over 400 MC runs).}}
\label{fig_OSPA_three_targets_1}
\end{figure}
\begin{figure}[htbp]
\centering
\includegraphics[width=9cm]{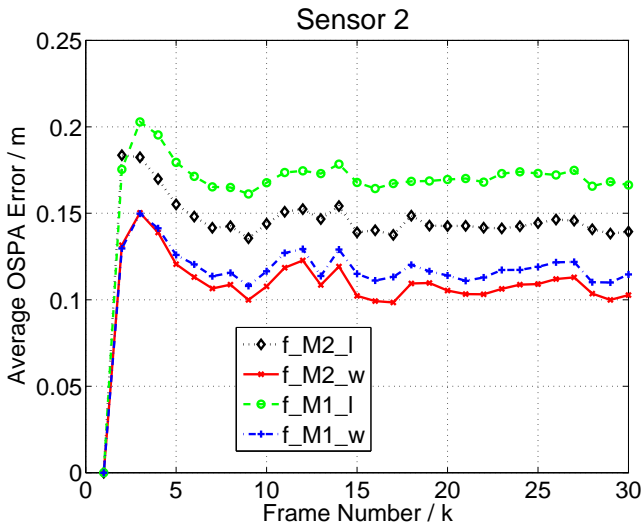} 
\caption{\small{ The average OSPA error of sensor 2 (averaged over 400 MC runs).}}
\label{fig_OSPA_three_targets_2}
\end{figure}
It can be seen from figures \ref{fig_OSPA_three_targets_1} and \ref{fig_OSPA_three_targets_2} that when the local filters receive feedback from  GCI-MB fusion ($\textbf{M}$2), they perform significantly better than $\textbf{M}$1. The theoretical analysis of the performance gain of the feedback on the fusion can reference \cite{DMMT-feedback1,DMMT-feedback2}. The significant enhancement in performance (in terms of OSPA errors) also verifies the effectiveness of  MB approximation and the GCI-fusion devised and presented in this work. To demonstrate how the performance advantage gained from sensor fusion increases with more sensors, we computed the OSPA errors averaged over 400 MC runs and 30 frames, and compared the results for the case when there is one sensor only, with the case of two sensors and the case of three sensors. In each case, both modes $\textbf{M}$1 and $\textbf{M}$2 were examined. The results are presented in Table~\ref{tabone} and demonstrate the efficacy of the proposed sensor fusion algorithm in the form of the enhanced average errors achieved with more sensors.

\begin{table}[htbp]
\renewcommand{\arraystretch}{1.5}
\caption{Average OSPA Error \textit{VS}  Number of Sensors}
\begin{center}
\begin{tabular*}{0.45\textwidth}{@{\extracolsep{\fill}}c c c c}
\toprule
 Number of sensor                         & One                  &   Two   &  Three  \\
\midrule
OSPA of M1 (m)                   &  0.1715               &    0.1282   &   0.1163  \\
OSPA of M2 (m)                   &  0.1715                & 0.1166     &  0.1093   \\
\bottomrule
\end{tabular*}
\label{tabone}
\end{center}
\end{table}
\subsubsection{Experiment 2}
In order to further demonstrate the utility of the proposed GCI-MB fusion, its performance is compared with the GCI fusion with PHD filter (GCI-PHD) proposed in \cite{Uney-2}. For local sensors, the PHD-TBD filter proposed in \cite{PHD_TBD} and the MB-TBD  filter are adopted in the GCI-PHD fusion and GCI-MB fusion respectively. The number of particles for PHD-TBD filter is 600, while the number of particles is 200   per hypothesized  track in the MB-TBD filter. Other parameters  are set to be  the same for PHD-TBD and MB-TBD filters.
\begin{figure}[htbp]
\centering
\includegraphics[width=9cm]{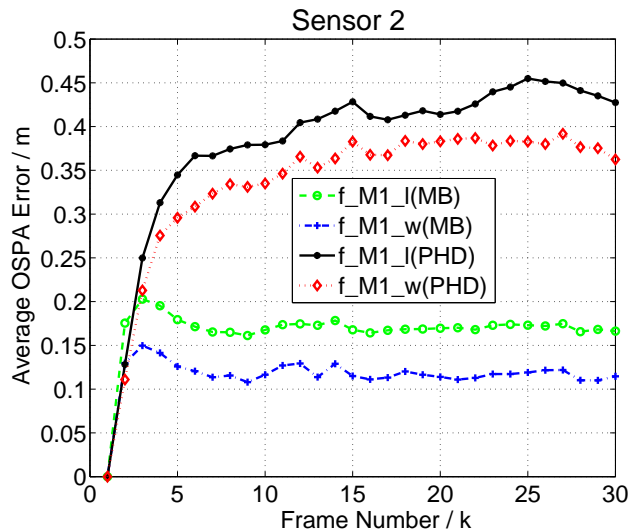} 
\caption{\small{The performance comparison between GCI-PHD and GCI-MB fusion at sensor 2 (averaged over 400 MC runs).}}
\label{fig_OSPA_GCI_MB_vs_PHD}
\end{figure}

Fig. 12  shows the OSPA errors of the local PHD-TBD filter, the GCI-PHD fusion, the local MB-TBD filter and  the GCI-PHD for  sensor 2 working in
\textbf{M}1. The curves shown in Fig.~\ref{fig_OSPA_GCI_MB_vs_PHD} illustrate the performance difference between  GCI-MB fusion  and GCI-PHD fusion, and their corresponding local filters, respectively. It can be seen that when the performances of  tracking or fusion algorithms reach a stable level, the OSPA error of the GCI-MB fusion is significantly lower  than the GCI-PHD fusion, and the similar performance difference  can be observed from their local filters. The reason is that the MB-TBD filter is a closed-form solution for the TBD observation model while the PHD-TBD filter is an approximate solution.  These results highlight the utility of the proposed GCI-MB fusion algorithm.
\section{Conclusion}
This paper investigates the problem of distributed multi-object tracking (DMMT) with multi-object multi-Bernoulli (MB) filter based on generalized Covariance Intersection. By employing two reasonable approximations, a tractable closed-form
formulation of GCI fusion with MB posteriors (GCI-MB) is derived. A particle implementation of the proposed GCI-MB fusion is also given, and its efficacy and robustness are demonstrated in numerical results. Future work will tack two major issues. Firstly, the number of hypotheses to be accounted for in the proposed distributed tracking solution grows exponentially with the number of targets. Further research is needed to investigate efficient implementations of GCI-MB in which irrelevant hypotheses are detected and pruned early, so the computational cost is limited in presence of numerous targets. Secondly, if targets move to close proximity of each other, Bernoulli components of the posteriors may not be well-separated. This will have an impact on the accuracy of the approximation made in derivation of the GCI-MB fusion. Further work will  address the GCI-MB fusion problem considering targets in proximity.


\begin{thebibliography}{1}

\bibitem{CY-Chong}
C.~Y. Chong, S. Mori, and K.~C. Chang, ``Distributed multitarget multisensor tracking,'' {\em Multitarget-Multisensor Tracking: Advanced Applications;
Y. Bar-Shalom (ed.); Artech House Chapter 8}, 1990.
\bibitem{EMD-Julier}
S. J. Julier, T. Bailey, and J. K. Uhlmann, ``Using exponential
mixture models for suboptimal distributed data fusion,'' in { \em Proc. IEEE Nonlinear Stat. Signal Proc. Workshop}, pp. 160-163, Sep. 2006.
\bibitem{Mahler-1}
R. Mahler, ``Optimal/Robust distributed data fusion: a unified approach,'' in {\em Proc. SPIE Defense Sec. Symp.}, 2000.

\bibitem{Uhlmann}
J. K. Uhlmann, ``Dynamic Map Building and Localization for Autonomous vehicles,'', { \em Ph.D. dissertation, Univ. of Oxford, Oxford, U.K.}, 1995.


\bibitem{chernoff-fusion}
K.~C. Chang, C.~Y. Chong, and S. Mori, ``Analytical and computational evaluation of scalscalable distributed fusion algorithms,''  {\em IEEE Trans. Aerosp.
Electron. Syst.}, Vol. 46, No. 4, pp. 2022-2034, Oct. 2010.
\bibitem{GMD-fusion}
T. Bailey, S. Julier, and G. Agamennoni, ``On conservative fusion of information with unknown non-Gaussian dependence,'' in {\em Proc. IEEE Int. Fusion Conf.}, pp. 1876-1883, Jul. 2012.
\bibitem{Clark}
D. Clark, S. Julier, R. Mahler, and B. Risti\'{c}, ``Robust multi-object sensor fusion with unknown correlations,'' in {\em Proc. Sens. Signal Process. Defence (SSPD ¡¯10)}, Sep. 2010.




\bibitem{Uney-2}
M. \"{U}ney, D. Clark, and S. Julier, ``Distributed fusion of PHD filters via exponential mixture densities,'' {\em IEEE J. Sel. Topics Signal Process.}, Vol. 7, No. 3, pp. 521-531, Apr. 2013.

\bibitem{Battistelli}
G. Battistelli, L. Chisci, C. Fantacci, A. Farina, and A. Graziano, ``Consensus CPHD filter for distributed multitarget tracking,'' {\em IEEE J. Sel. Topics Signal Process.}, Vol. 7, No. 3, pp. 508-520, Mar. 2013.

\bibitem{Mehmet}
M. B. Guldogan, ``Consensus Bernoulli filter for distributed detection and tracking using multi-static doppler shifts,'' {\em IEEE Signal Process. Lett.}, Vol. 21, No. 6, pp. 672-676, Jun. 2014.


\bibitem{MeMBer_Mahler}
 R.~P.~S. Mahler, Statistical Multisource-Multitarget Information Fusion. Norwell, MA, USA: Artech House, 2007.

\bibitem{P-Brcac}
P. Braca, S. Marano, V. Matta and P. Willett, ``Asymptotic effiency of the PHD in multitarget/multisensor estimation,'' { \em IEEE J. Sel. Top. Signal Process.}, Vol. 7, No. 3, pp. 553-564, Jun. 2013.

\bibitem{PHD-Vo}
B.-N. Vo and W.~K. Ma, ``The Gaussian mixture probability hypothesis density filter,'' {\em IEEE Trans. on Signal Process.}, Vol. 54, No. 11, pp. 4091-4104, Nov. 2006.

\bibitem{PHD_TBD}
K.~Punithakumar, T.~Kirubarajan, and A. Sinha, ``A sequential Monte
Carlo probability hypothesis density algorithm for multitarget track-before-detect,'' in {\em Proc. SPIE Conf. Signal Data Processing Small Targets,} San Diego, CA, vol. 5913, Aug. 2005.

\bibitem{Vo-CPHD}
B.~T. Vo, B.~N. Vo, and A. Cantoni, ``Analytic implementations of the cardinalized probability hypothesis density filter,'' { \em IEEE Trans. on Signal Process.}, Vol. 55, No. 7, pp. 3553-3567, Jul. 2007.


\bibitem{MeMBer_Vo2}
B.~T. Vo, B.~N. Vo, and A. Cantoni, ``The cardinality balanced multi-target multi-Bernoulli filter and its implementations,'' {\em IEEE Trans. on Signal Process.},
Vol. 57, No. 2, pp. 409-423, Oct. 2009.
\bibitem{MeMber_Vo3}
B.~T. Vo, B.~N. Vo, N.~T. Pham and D. Suter, ``Joint detection and estimation of multiple Objects from image observation,''  \emph{IEEE Trans. on Signal Process.},
Vol. 58, No. 10, pp. 5129-5141, Oct. 2010.
\bibitem{Vo_radar_target_nonline}
B. T. Vo, B.~N. Vo and R. Hoseinnezhad, ``Robust multi-Bernoulli filtering,''  {\em IEEE J. Sel. Topics Signal Process.} Vol. 7, No. 3, pp. 399-409, Jun. 2013.

\bibitem{Gunes_sonar}
A. Gunes and M.B. Guldogan, ``Multi-target bearing tracking with a single acoustic vector sensor based on multi-Bernoulli filter'' in {\em Proc. OCEANS}, Genova, Italy, pp. 1-5, May 2015.
\bibitem{Reza_ground_target}
B.~T. Vo, B.~N. Vo and R. Hoseinnezhad, ``Multi-Bernoulli based track-before-detect with road constraints,''  in {\em Proc. IEEE Int. Fusion Conf.}, pp. 840-846, Jul. 2012.

\bibitem{Reza_sensor_control_letter}
K.G. Amirali, R. Hoseinnezhad and B.H. Alireza, ``Robust multi-Bernoulli sensor selection for multi-target tacking in sensor networks,''  {\em IEEE Signal Process. Lett.}, Vol. 20, No. 12, pp.1167-1170, Dec. 2013.

\bibitem{Reza_sensor_control_AES}
K.G. Amirali, R. Hoseinnezhad and B.H. Alireza, ``Multi-Bernoulli sensor control via minimization of expected estimation errors,'' {\em IEEE Trans. Aerosp.
Electron. Syst.},  Vol. 51, No. 3, pp. 1762-1773, Jul. 2015.
\bibitem{Reza_audio_visual}
R. Hoseinnezhad, B.~N. Vo, B. T. Vo, and D. Suter, ``Bayesian integration of audio and visual information for multi-target tracking using a CB-MEMBER filter,'' in {\em Proc. Int. Conf. Acoust., Speech, Signal Process. (ICASSP), Prague, Czech Republic,} pp. 2300-2303, May 2011.
\bibitem{Reza_visual_tracking}
R. Hoseinnezhad, B.~N. Vo and B. T. Vo, ``Visual tracking in background subtracted image sequences via multi-Bernoulli filtering,''   \emph{IEEE Trans. on Signal Process.},  pp: 392-397,Vol. 61, No. 2, Jan. 2013.

\bibitem{MeMber_Wei1}
J. Wei and X. Zhang, ``Mobile multi-target tracking in two-tier hierarchical
wireless sensor networks,'' in {\em  Proc. IEEE Military Commun.
Conf.}, pp. 1-6, 2009.
\bibitem{MeMber_Wei2}
J. Wei and X. Zhang, ``Sensor self-organization for mobile multi-target
tracking in decentralized wireless sensor networks,'' in {\em Proc. IEEE
Wireless Commun. Netw. Conf.}, pp. 1-6, 2010.
\bibitem{Dunne_MM}
D. Dunne and T. Kirubarajan, ``Multiple model multi-Bernoulli filter for manoeuvring targets,'' {\em  IEEE Trans. Aerosp.
Electron. Syst.}, Vol. 49, No. 4, pp. 2679-2692, Oct. 2013.
\bibitem{Williams}
J. L. Williams, ``Hybrid Poisson and multi-Bernoulli filters'' in {\em Proc. IEEE Int. Fusion Conf.}, pp. 1103-1110, Jul. 2012.
\bibitem{LMB_Vo}
B.~N. Vo, B.~T. Vo, ``Labeled random finite sets and multi-object conjugate priors,'' {\em IEEE Trans. on Signal Process.}, Vol. 61, No. 10, pp. 3460-3475, Jul. 2013.

\bibitem{LMB_Vo2}
B.~N. Vo, B.~T. Vo, and D. Phung, ``Labeled random finite sets and
the Bayes multi-target tracking filter,'' \emph{IEEE Trans. on Signal Process.}, Vol.PP, No.99, pp.1, Oct. 2014.


\bibitem{delta_GLMB}
S. Reuter, B.~T. Vo, B.~N. Vo, and K. Dietmayer, ``The labeled multi-Bernoulli filter,'' \emph{IEEE Trans. on Signal Process.}, Vol. 62, No. 12, pp.3246-3260, Jun. 2014.

\bibitem{LRFS_Bear_Vo}
M.~Beard, B.-T. Vo, and B.-N. Vo, ``Bayesian multi-target tracking with merged measurements using labelled random finite sets,'' \emph{IEEE Trans. Signal Process.}, vol.~63, no.~6, pp. 1433--1447, 2015.

\bibitem{LRFS_Papi_Kim}
F.~Papi and D.~Y. Kim, ``A particle multi-target tracker for superpositional measurements using labeled random finite sets,'' \emph{IEEE Trans. Signal Process.}, vol.~63, no.~16, pp. 4348--4358, 2015.

\bibitem{Fantacci-BN}
C.~Fantacci, B.-T. Vo, F.~Papi, and B.-N. Vo, ``The marginalized $\delta$-glmb filter,'' \emph{arXiv preprint arXiv:1501.00926}, 2015.

\bibitem{GLMB_Papi_Vo}
F.~Papi, B.-N. Vo, B.-T. Vo, C.~Fantacci, and M.~Beard, ``Generalized labeled multi-bernoulli approximation of multi-object densities,'' \emph{IEEE Trans. Signal Process.}, vol.~63, no.~20, pp. 5487--5497, 2015.
\bibitem{Fantacci-BT}
C.~Fantacci, B.-N. Vo, B.-T. Vo, G.~Battistelli, and L.~Chisci, ``Consensus
  labeled random finite set filtering for distributed multi-object tracking,''
  \emph{arXiv preprint arXiv:1501.01579}, 2015.
  \bibitem{GCI-GMB}
B.~L. Wang, W. Yi, S. Q. Li, M. R. Morelande, L. J. Kong and X. B. Yang,  ``Distributed multi-target tracking via generalized multi-Bernoulli random finite sets,''  in {\em Proc. IEEE Int. Fusion Conf.}, pp. 253-261, Jul. 2015.
\bibitem{GCI-LSM}
B.~L.~Wang, W.~Yi, S.~Q.~Li, L.~J.~Kong and X. B. Yang, ``Distributed fusion of labeled multi-object densities via label spaces matching,'' \emph{arXiv preprint arXiv:1603.08336}, 2016.
\bibitem{Set_JPDA}
L. Svensson, D.~Svensson, and M. Guerriero, ``Set JPDA filter for multitarget tracking,'' {\em IEEE Trans. on Signal Process.}, Vol.59, No. 10, pp. 4677-4691, Aug. 2011.

\bibitem{KDE-1}
B. Silverman, ``Density estimation for statistics and data analysis,'' {\em London, U.K.: Chapman \& Hall}, 1986.

\bibitem{KDE-2}
C. Fraley and A. W. Raftery, ``Model based clustering, discriminant
analysis, and density estimation,'' {\em J. Amer. Statist. Assoc.}, Vol. 97, No.
458, pp. 611-631, Jun. 2002.
  \bibitem{GCI-MB-Con}
B.~L. Wang, W. Yi, S. Q. Li, L. J. Kong and X. B. Yang,  ``Distributed fusion with multi-Bernoulli based on generalized covariance intersection,''  in {\em Proc. IEEE Int. Radar Conf.}, pp. 958-962, May. 2015.

\bibitem{July}
S. J. Julier, ``An empirical study into the use of chernoff information
for robust, distributed fusion of Gaussian mixture models,'' in {\em Proc. IEEE Int. Fusion Conf.}, pp. 1-8, Jul. 2006.
\bibitem{HPD}
G.~E.~P.~Box and G.~C.~Tiao, Bayesian Inference in Statistical Analysis,  Addison Wesley, 1973.
\bibitem{RCT}
B. Silverman, {\em Density Estimation for Statistics and Data Analysis.}
London, U.K.: Chapman \& Hall, 1986.
\bibitem{Jones}
M. C. Jones, J. S. Marron, and S. J. Sheather, ``A brief survey of badwidth
selection for density estimation,''  {\em J. Amer. Statist. Assoc.}, Vol.
91, No. 433, pp. 401-407, Mar. 1996.




\bibitem{DMMT-feedback1}
W. Khawsuk and L. Y. Pao. ``Distributed multi-sensor multi-
target with feedback,'' in {\em Proc. American Control Conf}, Boston. MA, June 2004.

\bibitem{DMMT-feedback2}
 Y.~Zhu, J.~Zhao, K.~Zhang, X.~R.~Li, and Z.~You.``Performance
analysis for distributed track fusion with feedback,'' in {\em Proc. World
Congress on Intelligent Control and Automation}, pp. 2933-2936, 2000.


\bibitem{MeMBer_Vo1}
D. Schumacher, B.~T. Vo, B.~N. Vo, ``A consistent metric for performance evaluation of multi-object filters,'' {\em IEEE Trans. on Signal Process.}, Vol.56, No. 8, pp. 3447-3457, Aug. 2008.

\bibitem{L_Xiao}
L. Xiao, L. S. Boyd, and S. Lall, ``A scheme for robust distributed
sensor fusion based on average consensus,'' in  {\em Proc. 4th Int. Symp. Inf.
Process. Sens. Netw. (IPSN)}, 2005, pp. 63-70.
\bibitem{no-Gaussian-clutter}
G. L. Cui, L. J. Kong, X. B. Yang, ``Multiple-input multiple-output radar detectors design in non-Gaussian clutter,'' { \em IET Radar Sonar  Navig.,} Vol. 4, No. 5, pp. 724-732.


\bibitem{non-iid}
G. L. Cui, A. D. Maio, M. Piezzo, ``Performance prediction of the incoherent radar detector for correlated generalized Swerling-Chi fluctuating targets,''
{\em  IEEE Trans. Aerosp. Electron. Syst.,} Vol. 49, No. 1, pp. 356-368
 \end{thebibliography}
\end{document}